 \newtheorem{thm}{Theorem}[section]
 \newtheorem{prop}[thm]{Proposition}
 \theoremstyle{definition}
 \theoremstyle{remark}
 \numberwithin{equation}{section}
\title [Scattering matrices for perturbations of Laplace operator]{Scattering matrices for perturbations of Laplace operator by infinite sums of zero-range potentials}
\author[V. Adamyan]{Vadym Adamyan } 
\address{
Odessa I.I. Mechnikov National University\\
65082 Odessa\\ 
Ukraine}
\email{vadamyan@onu.edu.ua}
\thanks{}
\begin{document}

\begin{abstract}
This paper analyzes the scattering matrix for two unbounded self-adjoint operators: the
standard Laplace operator in three-dimensional space and a second operator
that differs from the first by an infinite sum of zero-range potentials.
\end{abstract}

\maketitle

\section{Introduction}

Let $A,A_{1}$ be unbounded selfadjoint operators in Hilbert space $\mathcal{H}$, $\mathcal{D},\mathcal{D}_{1}$ and $R(z),R_{1}(z), \, \mathrm{Im}z\neq 0,$ are the domains and resolvents of $A,A_{1}$, respectively. Following \cite{AK} we call $A_{1}$ \textit{singular perturbation} of $A$ if
 $$ \begin{array}{cc} 
1)\,  \mathcal{D}\cap\mathcal{D}_{1} \, \text{is dense in }  \mathcal{H}; &
2)\, A_{1}=A \, \text{on } \mathcal{D} \cap \mathcal{D}_{1}. \end{array} 	
$$ 
In accordance with this definition, $A,A_{1}$ are selfadjoint extensions of the same densely defined symmetric operator
$$ B= A|_{\mathcal{D}\cap\mathcal{D}_{1}}=A_{1}|_{\mathcal{D}\cap\mathcal{D}_{1}} . $$
Therefore, the specific properties of the operator $A_{1}$ as a perturbation of the given operator $A$ can be investigated in the framework of the extension theory of symmetric operators.  
Although this path ultimately leads to the desired results, it is, as a rule, quite complicated and confusing. At that, the range of significant problems related to real applications that can be posed and solved in the framework of the theory of singular perturbations of selfadjoint operators is quite extensive (see \cite{AGHH}, \cite{EKKST}). This is because, for the singular perturbations, it is impossible to operate directly with the difference of operators $A_{1}$ and $A$. 

At the same time, the known properties of the primal operator $A$ and Krein's resolvent formula \cite{Kr} for $R_{1}(z)$ contain exhaustive information on the spectrum and scattering associated with the singular perturbation $A_{1}$. We would like to emphasize that this information can be extracted directly using the following representation of Krein's resolvent formula for singular perturbations \cite{Ad} without investigating the complexities of extension theory beforehand.
  
\begin{thm}\label{kreinnext1}
	Let $\mathcal{H}$ and $\mathcal{K}$ be Hilbert spaces, $A$ be an unbounded selfadjoint operator in $\mathcal{H}$ and  $R(z),\; \mathrm{Im}z\neq 0,$ is the resolvent of $A$. Let $G(z)$  be a bounded holomorphic in the open upper and lower half-planes operator function from $\mathcal{K}$ to  $\mathcal{H}$ satisfying the conditions 
	\begin{itemize}
		\item{for any non-real $z,z_{0}$
			\begin{equation}\label{res3}
			G(z)=G(z_{0})+(z-z_{0})R(z)G(z_{0}); 
			\end{equation}
		}
		\item{zero is not an eigenvalue of the operator  $G(z)^{*}G(z)$ at least for one and hence for all non-real $z$;}  \item {the intersection of the domain $\mathcal{D}(A)$ of $A$ and the subspace \newline $\mathcal{N}=\overline{G(z_{0})\mathcal{K}}\subset\mathcal{H}$ consists only of the zero-vector,}
	\end{itemize}
	and let $Q(z)$ be holomorphic in the open upper and lower half-planes operator function in $\mathcal{K}$ such that 
	\begin{itemize}
		\item{$Q(z)^{*}=Q(\bar{z}), \; \mathrm{z}\neq 0$;}
		\item{for any non-real $z,z_{0}$
			\begin{equation}
			Q(z)-Q(z_{0})=(z-z_{0})G(\bar{z_{0}})^{*}G(z). 
			\end{equation}
		}
	\end{itemize} 
	Then for any selfadjoint operator $L$ in $\mathcal{K}$ such that the operator $Q(z)+L, \:\mathrm{Im}z\neq 0, $ is boundedly invertible the operator function
	\begin{equation}\label{krein5}
	R_{L}(z)=R(z)-G(z)\left[Q(z)+L \right]^{-1} G(\bar{z})^{*}
	\end{equation}
	is the resolvent of some singular selfadjoint perturbation  $A_{L}$ of $A$. 
\end{thm}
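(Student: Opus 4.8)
The plan is to show that $R_{L}(z)$ possesses the three defining properties of the resolvent of a self-adjoint operator — the symmetry relation, the first resolvent identity, and triviality of the kernel — and then to read off the singular-perturbation structure directly from the hypothesis on $\mathcal{N}$. Throughout I abbreviate $M(z)=[Q(z)+L]^{-1}$ and $\Gamma(z)=G(\bar z)^{*}$, so that $R_{L}(z)=R(z)-G(z)M(z)\Gamma(z)$. The symmetry relation is immediate: taking adjoints and using $R(z)^{*}=R(\bar z)$, $L^{*}=L$ and $Q(z)^{*}=Q(\bar z)$ gives $[Q(z)+L]^{*}=Q(\bar z)+L$, hence $R_{L}(z)^{*}=R(\bar z)-G(\bar z)[Q(\bar z)+L]^{-1}G(z)^{*}=R_{L}(\bar z)$.

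The resolvent identity $R_{L}(z)-R_{L}(z_{0})=(z-z_{0})R_{L}(z)R_{L}(z_{0})$ is the main labor. First I would record the symmetrized forms of the hypotheses: interchanging $z,z_{0}$ in $G(z)-G(z_{0})=(z-z_{0})R(z)G(z_{0})$ gives $G(z)-G(z_{0})=(z-z_{0})R(z_{0})G(z)$, and adjoints then yield $\Gamma(z)-\Gamma(z_{0})=(z-z_{0})\Gamma(z_{0})R(z)=(z-z_{0})\Gamma(z)R(z_{0})$, while the $Q$-identity also reads $Q(z)-Q(z_{0})=(z-z_{0})\Gamma(z)G(z_{0})$. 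The single algebraic fact that drives the cancellation is
\[
M(z)\,[Q(z)-Q(z_{0})]\,M(z_{0})=M(z_{0})-M(z),
\]
obtained from $M(z)(Q(z)+L)=I=(Q(z_{0})+L)M(z_{0})$. Expanding $(z-z_{0})R_{L}(z)R_{L}(z_{0})$ into its four products, substituting these relations, and collecting terms, everything telescopes to $R_{L}(z)-R_{L}(z_{0})$. I expect this bookkeeping to be the principal obstacle, although it is mechanical once the symmetrized identities are in hand.

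Next I would establish $\ker R_{L}(z)=\{0\}$, which is where the hypotheses on $\mathcal{N}$ and on the injectivity of $G$ enter. If $R_{L}(z)f=0$, then $R(z)f=G(z)g$ with $g=M(z)\Gamma(z)f$; writing $G(z)g=G(z_{0})g+(z-z_{0})R(z)G(z_{0})g$ and rearranging gives $R(z)[f-(z-z_{0})G(z_{0})g]=G(z_{0})g$, so $G(z_{0})g\in\mathcal{D}(A)\cap\mathcal{N}=\{0\}$. Hence $G(z_{0})g=0$, so $G(z)g=0$ and $R(z)f=0$, forcing $f=0$ by injectivity of $R(z)$. A pseudo-resolvent with trivial kernel is the resolvent of a unique densely defined closed operator $A_{L}:=z+R_{L}(z)^{-1}$ on $\operatorname{Ran}R_{L}(z)$ (independent of $z$), and the already proved relation $R_{L}(z)^{*}=R_{L}(\bar z)$ upgrades this to $A_{L}^{*}=A_{L}$.

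Finally I would identify $A_{L}$ as a singular perturbation by describing the common domain. If $h\in\ker G(\bar z)^{*}$ then $\Gamma(z)h=0$ and $R_{L}(z)h=R(z)h$, so $R(z)\ker G(\bar z)^{*}\subset\mathcal{D}(A)\cap\mathcal{D}(A_{L})$; conversely, writing $u=R(z)\phi=R_{L}(z)\psi$ the kernel computation forces $\phi=\psi$ and then $G(z)M(z)\Gamma(z)\phi=0$, whence $G(z_{0})M(z)\Gamma(z)\phi=0$, and the injectivity of $G$ (the hypothesis that zero is not an eigenvalue of $G^{*}G$) together with invertibility of $M(z)$ gives $\Gamma(z)\phi=0$. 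Thus $\mathcal{D}(A)\cap\mathcal{D}(A_{L})=R(z)\ker G(\bar z)^{*}$, and on this set $(A-z)u=h=(A_{L}-z)u$, i.e.\ $A_{L}=A$. For density, $w\perp R(z)\ker G(\bar z)^{*}$ means $R(\bar z)w\in\overline{G(\bar z)\mathcal{K}}$; but $\mathcal{D}(A)\cap\overline{G(\bar z)\mathcal{K}}=\{0\}$ because the hypothesis $\mathcal{D}(A)\cap\mathcal{N}=\{0\}$ propagates to every $z$ (the operator $I+(\bar z-z_{0})R(\bar z)$ differs from the identity by one with range in $\mathcal{D}(A)$), so $R(\bar z)w=0$ and $w=0$. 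Hence the common domain is dense and $A_{L}$ is a singular perturbation of $A$.
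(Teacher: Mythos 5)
Your proposal is correct, and its core follows the same route as the paper: both verify that $R_{L}(z)$ satisfies the three standard criteria for being the resolvent of a self-adjoint operator (the symmetry $R_{L}(z)^{*}=R_{L}(\bar z)$, the Hilbert identity, and triviality of the kernel). The differences are ones of completeness, and they cut in your favor. The paper dismisses symmetry and the resolvent identity as "directly following" from the hypotheses, whereas you isolate exactly the algebra that makes the telescoping work: the symmetrized $G$-, $\Gamma$- and $Q$-relations together with $M(z)[Q(z)-Q(z_{0})]M(z_{0})=M(z_{0})-M(z)$; expanding $(z-z_{0})R_{L}(z)R_{L}(z_{0})$ with these does collapse to $R_{L}(z)-R_{L}(z_{0})$, so your sketch is sound. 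In the kernel step the paper asserts "by assumption, $R(z)\mathcal{H}\cap G(z)\mathcal{K}=\{0\}$ for non-real $z$", although the stated hypothesis concerns only $\mathcal{N}=\overline{G(z_{0})\mathcal{K}}$ at the point $z_{0}$; your rearrangement $R(z)[f-(z-z_{0})G(z_{0})g]=G(z_{0})g$, which places $G(z_{0})g$ in $\mathcal{D}(A)\cap\mathcal{N}$, supplies precisely the propagation from $z_{0}$ to arbitrary non-real $z$ that the paper takes for granted. Finally --- and this is your genuine addition --- the theorem asserts that $A_{L}$ is a \emph{singular perturbation} of $A$, which by the paper's own definition requires that $\mathcal{D}(A)\cap\mathcal{D}(A_{L})$ be dense and that $A_{L}=A$ on it; the paper's proof never addresses this, while you identify $\mathcal{D}(A)\cap\mathcal{D}(A_{L})=R(z)\ker G(\bar z)^{*}$, check agreement of the operators there, and prove density via $\bigl(\ker G(\bar z)^{*}\bigr)^{\perp}=\overline{G(\bar z)\mathcal{K}}$ and the propagated hypothesis $\mathcal{D}(A)\cap\overline{G(\bar z)\mathcal{K}}=\{0\}$. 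One small point to tighten: when you invoke "a pseudo-resolvent with trivial kernel is the resolvent of a densely defined closed operator", density of $\operatorname{Ran}R_{L}(z)$ is not a consequence of kernel triviality alone; it follows from $\overline{\operatorname{Ran}R_{L}(z)}^{\perp}=\ker R_{L}(z)^{*}=\ker R_{L}(\bar z)=\{0\}$, i.e. you need the already-established symmetry relation, which deserves an explicit line.
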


A singular perturbation $A_{1}$ of a given selfadjoint operator $A$ is  \newline from now on referred to as \textit{close} (to {A}) if the difference of the resolvents $R_{1}(z)-R(z), \, \mathrm{Im}z\neq 0,$ is a trace class operator. Per this definition, the absolutely continuous components of the close perturbation $A_{1}$ and the operator $A$ are similar \cite{Kat}. Moreover, absolutely continuous components of close singular perturbations can be described using the constructions of scattering theory. 

The aim of this study is \begin{itemize}
	\item {to derive  an explicit expression relating the scattering matrix for an unbounded selfadjoint operator $A$ with absolutely continuous spectrum and its close singular perturbation $A_{1}$ with operator $L$ and the function $Q(z)$ in the Krein's formula for the same pair $A,A_{1}$, slightly modifying the approach developed in \cite{AdP};} 
		\item {illustrate the corresponding results by the example of singular selfadjoint perturbations of the Laplace operator $$A=-\Delta=-\frac{\partial^{2}}{\partial x^{2}_{1}}-\frac{\partial^{2}}{\partial x^{2}_{2}}-\frac{\partial^{2}}{\partial x^{2}_{3}} $$  in $\mathbf{L}_{2}(\mathbf{R}_{3})$  defined on the Sobolev subspaces ${H}_{2}^{2}\left( \mathbf{R}_{3}\right)$.}
\end{itemize}

In Section 2, for completeness, we give a sketch of the proof of Theorem \ref{kreinnext1} and indicate additional conditions on the selfadjoint $A$ and its singular perturbation $A_{1}$  under which the difference of resolvents $R_{1}(z)-R(z)$  is a trace class operator, that is the conditions under which the singular perturbation  $A_{1}$ and the unperturbed operator $A$ are close. These conditions are crucial as they determine the applicability of the scattering theory to the given perturbation \cite{Ya}. Further, these conditions are refined for singular perturbations of the Laplace operator in $\mathbf{L}_{2}(\mathbf{R}_{3})$.

In Section 3, which is some modified version of the paper \cite{AdP}, we calculate the scattering matrix for a self-adjoint operator $A$ with an absolutely continuous spectrum and its close singular perturbation $A_{1}$ and find an explicit expression of the scattering matrix for the pair $A_{1},A$ through the entries in Krein's formula (\ref{krein5}). It should be noted that a similar expression for the scattering matrix was derived in terms of the abstract Titchmarsh-Weyl operator function in \cite{BMN},\cite{BMN1}. 

Section 4 explores scattering theory for self-adjoint singular perturbations of the standard Laplace operator $A$  in $L^2(\mathbb{R}^3)$, modeled as an infinite series of zero-range potentials. By imposing specific conditions on the spatial arrangement and "heights" or "depths" of these potentials, which provide nuclear proximity of the resolvents of $A_1$ and $ A $, and using results from the previous section, we derive an explicit expression for the scattering matrix for the pair $A_1, A $. In line with the fact that the absolutely continuous components of both operators $A_1$ and $A$ have a uniform infinitely multiple Lebesgue spectrum covering the half-axis $[0,\infty)$, the values of the obtained scattering matrix represent an operator function of the spectral parameter on $[0,\infty)$ the values of which are bounded integral operators in space $L^2(\mathbf{S}_2)$, $\mathbf{S}_{2}$ be the unit sphere in $\mathbf{R}_{3}$. Under the specified conditions, this operator function appears to be continuous on $(0,\infty)$ with respect to the operator norm.  

\section{Resolvents of singular selfadjoint perturbations of Laplace operator}
To make the paper self-contained, we outline the proof of Theorem \ref{kreinnext1}. 

Recall the following well-known criterion (see, for example \cite{Ad}):\textit{ a holomorphic function $R(z)$ on the open upper and lower half-plains of the complex plane whose values are bounded linear operators in Hilbert space $\mathcal{H}$ is the resolvent of a selfadjoint operator $A$ in $\mathcal{H}$ if and only if 
 \begin{itemize}
	\item{\begin{equation}\label{first1}
		\ker R(z)=\{0\};
		\end{equation}}
	\item{\begin{equation}\label{first}
		R(z)^{*}=R(\overline{z});
		\end{equation}}
	\item{for any non-real $z_{1},z_{2}$ the Hilbert identity 
		\begin{equation}\label{hilb1}
		R(z_{1})-R(z_{2})=(z_{1}-z_{2})R(z_{1})R(z_{2})
		\end{equation} }
	 \end{itemize}
holds.}  

So to verify the validity of Theorem 1.1, it is only necessary to check that its assumptions about $R_{L}(z)$ guarantee the fulfillment of the conditions (\ref{first1}) - (\ref{hilb1}). 

But relation (\ref{first}) and identity (\ref{hilb1}) for $R_{L}(z)$ directly follow from the fact that $R(z)$ is the resolvent of the self-adjoint operator and the properties, which $G(z)$ and $Q(z)$ possess according to the assumptions of Theorem \ref{kreinnext1}. 

To see that (\ref{first1}) is true for $R_{L}(z)$ suppose that there is a vector $h\in \mathcal{H}$  such that $R_{L}(z)h=0$ for some non-real $z$. By (\ref{krein5}) this means that
\begin{equation}\label{krein5a}
R(z)h=G(z)\left[Q(z)+L \right]^{-1} G(\bar{z})^{*}h.
\end{equation}
But by assumption, $R(z)\mathcal{H}\cap G(z)\mathcal{K}=\{0\}$ for non-real $z$. Hence $R(z)h=0$. Given that $R(z)$ is the resolvent of a self-adjoint operator, we conclude from this that $h=0$. Thus, the statement of Theorem \ref{kreinnext1} is true. 

Turning to the description of singular perturbations of the Laplace operator $A=-\Delta$ in $\mathbf{L}_{2}(\mathbf{R}_{3})$ remind that its resolvent $R(z)$ is the integral operator
\begin{equation}\label{lapl} \begin{array}{c}
\left(R(z)f\right)(\mathbf{x})=\frac{1}{4\pi}
\int_{\mathbf{R}_{3}}\frac{e^{i\sqrt{z}|\mathbf{x}-\mathbf{x}^\prime|}}{|\mathbf{x}-\mathbf{x}^\prime|}f\left(\mathbf{x}^\prime \right)d\mathbf{x}^\prime , \quad \mathrm{Im}{\sqrt{z}}>0, \\ \mathbf{x}=\left(x_{1},x_{2},x_{3} \right) , 
f(\cdot)\in {\mathbf{L}_{2}(\mathbf{R}_{3})}.
\end{array}
\end{equation} 
In the simplest case, when the support of the singular perturbation of the Laplace operator is a finite set of points $\mathbf{x}_{1},...,\mathbf{x}_{N}$ of $\mathbf{R}_{3}$, , the role of the auxiliary space $\mathcal{K}$ in Krein's formula (\ref{krein5}) for the perturbed resolvent $R_{L}(z)$ is natural to provide the linear space $\mathbf{C}_{N}$ and as a corresponding linear mapping $G(z)$ of $\mathbf{C}_{N}$ into $\mathbf{L}_{2}(\mathbf{R}_{3})$ may be taken the operator function, which transforms vectors of the canonical orthonormal basis in $\mathbf{C}_{N}$: $$
\mathbf{e}_{1}=\left( \begin{array}{c} 1 \\ 0 \\ ...\\ 0 \end{array}\right), \dots ,\mathbf{e}_{N}=\left( \begin{array}{c} 0 \\  ... \\0 \\1 \end{array}\right),
$$ respectively, into the functions
\begin{equation}\label{simp1}
\left( G(z)\mathbf{e}_{n}\right)(\mathbf{x})=g_{n}(z;\mathbf{x})= R(z)\delta(\cdot -\mathbf{x}_{n})(\mathbf{x})=\frac{1}{4\pi}\frac{e^{i\sqrt{z}|\mathbf{x}-\mathbf{x}_{n}|}}{|\mathbf{x}-\mathbf{x}_{n}|}, \quad 1\leq n\leq N.
\end{equation}
 If, in this case, the matrix function 
\begin{equation}\label{simp2}
Q(z)=\left(q_{mn}(z) \right)_{m,n=1}^{N}=\left\lbrace \begin{array}{c} 
q_{mn}(z)=g_{n}(z;\mathbf{x}_{m}-\mathbf{x}_{n}), \quad m\neq n,\\
q_{mm}(z)=\frac{i\sqrt{z}}{4\pi}
\end{array}\right. . 
\end{equation}
is substituted into (\ref{kreinnext1}) as the $Q$-function, then for any invertible Hermitian matrix $L=\left( w_{mn} \right)_{m,n=1}^{N} $ the resulting operator function $R_{L}(z)$ satisfies all the conditions of Theorem \ref{kreinnext1} and hence appears to be the resolvent of a singular seladjoint perturbation of the Laplace operator $A_{L}$ . Setting  $\mathcal{N}=G(z_{0})\mathbf{C}_{N},\; \mathrm{Im}z_{0}\neq 0,$ one can easily deduce from the Krein formula for $R_{L}(z)$ that in this case $A_{L}$ is nothing else than the Laplace differential operator $-\Delta$ with the domain \cite{Ad} 
\begin{equation}\label{simp3a}
\begin{array}{c}      
\mathcal{D}_{L}:=  \left\lbrace f: \, f=f_{0}+g, \, f_{0}\in {H}_{2}^{2}\left(\mathbf{R}_{3}\right), g\in\mathcal{N}, \right. \\  
\underset{\rho_{m}\rightarrow 0}{\lim}\left[\frac {\partial} {\partial\rho_{m}}\left(\rho_{m}f(\mathbf{x})\right)\right]+\sum\limits_{n=1}^{N}w_{mn} 
\underset {\rho_{n}\rightarrow 0}{\lim}\,[\rho_{n}\,f(\mathbf{x})] =0, \\ \left. \rho_{n}=|\mathbf{x}-\mathbf{x}_{n}|, \quad 1\leq n\leq N  \right\rbrace .  \end{array} 
\end{equation}
 In other words, here we are dealing with a perturbation in the form of a finite sum of zero-range potentials \cite{BF, DeOst}. 
 
 For perturbations of the Laplace operator A in the form of infinite sum of zero-range potentials located at points $\{\mathbf{x}_{n}\}_{m,n\in \mathbb{Z}}$ under an additional condition 
 \begin{equation}\label{dist}
 	\underset{_{m,n\in \mathbb{Z}}}{\inf}\left|\mathbf{x}_{m}-\mathbf{x}_{n} \right|=d>0.   
 \end{equation}
 the set of boundary conditions of the form (\ref{simp3a})  also generates a singular self-adjoint perturbation $A_{L}$ of $A$. 
 
 In what follows, it is assumed that the elements of canonical basis $\left\lbrace   \mathbf{e}_{n}=\left(\delta_{nm} \right)_{m\in \mathbb{Z}}\right\rbrace $ of the Hilbert space $\mathbf{l}_{2}$ belong to the domains of all mentioned operators in this space, and no distinction is made between those operators and the matrices that represent them in the canonical basis of $\mathbf{l}_{2}$. 
 
  \begin{thm} [\cite{GHM}]\label{kreinmain1+}
  Let $Q(z)$ be the operator function in $\mathbf{l}_{2}$ generated by the infinite matrix 
 \begin{equation}\label{simp3}
 Q(z)=\left(q_{mn}(z) \right)_{m,n\in \mathbb{Z}}, \, q_{mn}(z)=\left\lbrace \begin{array}{c} 
 q_{mn}(z)=g(z;\mathbf{x}_{m}-\mathbf{x}_{n}), \quad m\neq n,\\
 q_{mm}(z)=\frac{i\sqrt{z}}{4\pi}
 \end{array}\right. . 
 \end{equation}
 and $L$ be a selfadjoint operator   defined by the infinite matrix $\left( w_{mn}\right)_{m,n\in \mathbb{Z}}$ in the space of bilateral sequences $\mathbf{l}_{2}$. If the condition (\ref{dist}) holds, then the operator function
 \begin{equation}\label{krein6a} \begin{array}{c}
  R_{L}(z)=R(z)-\sum\limits_{m,n\in \mathbb{Z}}^{}\left([ Q(z)+4\pi L]^{-1}\right)_{mn} \left(\cdot,g_{n}(\bar{z};\cdot) \right)g_{m}(z;\cdot), \\ g_{n}(z;\mathbf{x})= R(z)\delta(\cdot -\mathbf{x}_{n})(\mathbf{x}), \end{array}
 \end{equation} 
 is the resolvent of the selfadjoint operator $A_{L}$ in $\mathbf{L}_{2}(\mathbf{R}_{3})$, which is the Laplace operator with the domain
 \[
 \mathcal{D}_{L}:=  \left\lbrace f: \, f=f_{0}+g, \, f_{0}\in {H}_{2}^{2}\left(\mathbf{R}_{3}\right), \, g\in \mathcal{N}, \right.  \] 
 
 \[ \left. \begin{array}{c} \underset{\rho_{m}\rightarrow 0}{\lim}\left[\frac {\partial} {\partial\rho_{m}}\left(\rho_{m}f(\mathbf{x})\right)\right]+\sum\limits_{n\in \mathbb{Z}}^{}w_{mn} \underset
 {\rho_{n}\rightarrow 0}{\lim}\,[\rho_{n}\,f(\mathbf{x})] =0, \\ \rho_{n}=|\mathbf{x}-\mathbf{x}_{n}|, \quad n\in \mathbb{Z} .  \end{array}\right.  \]
 \end{thm}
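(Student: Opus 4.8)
The plan is to realize Theorem~\ref{kreinmain1+} as an instance of the abstract Krein formula of Theorem~\ref{kreinnext1}, with $\mathcal{K}=\mathbf{l}_{2}$, with $G(z)$ the operator sending the canonical basis vector $\mathbf{e}_{n}$ to $g_{n}(z;\cdot)=R(z)\delta(\cdot-\mathbf{x}_{n})$, and with $Q(z)$ the matrix~(\ref{simp3}) but $4\pi L$ substituted for $L$. The finite-point computation preceding the statement already shows, entry by entry, that $G(z)$ and $Q(z)$ satisfy the algebraic requirements of Theorem~\ref{kreinnext1}; the whole difficulty in the infinite case is to promote these formal relations to genuine bounded operators on $\mathbf{l}_{2}$ and, above all, to secure the bounded invertibility of $Q(z)+4\pi L$. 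The one tool that makes all of this work is the separation hypothesis~(\ref{dist}).

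First I would establish the boundedness of $G(z)$ and of the off-diagonal part of $Q(z)$. Both reduce to controlling sums of the type $\sum_{n}\frac{e^{-\mathrm{Im}\sqrt{z}\,|\mathbf{x}_{m}-\mathbf{x}_{n}|}}{|\mathbf{x}_{m}-\mathbf{x}_{n}|}$, and here~(\ref{dist}) enters through a packing estimate: the balls $\{|\mathbf{x}-\mathbf{x}_{n}|<d/2\}$ are pairwise disjoint, so the number of centres in a spherical shell of radius $\sim kd$ about $\mathbf{x}_{m}$ is $O(k^{2})$, whence the sum is dominated by $\sum_{k\geq1}k\,e^{-\mathrm{Im}\sqrt{z}\,kd}<\infty$ uniformly in $m$. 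Schur's test then controls both $Q(z)$ and the Gram matrix $\Gamma(z)=G(z)^{*}G(z)$ as operators on $\mathbf{l}_{2}$, so that $G(z)$ is bounded. The algebraic identities --- the resolvent relation~(\ref{res3}), the symmetry $Q(z)^{*}=Q(\bar z)$, and the Nevanlinna identity $Q(z)-Q(z_{0})=(z-z_{0})G(\bar z_{0})^{*}G(z)$ --- are then checked columnwise and entrywise exactly as in the finite case, using $g_{n}(z)=R(z)\delta(\cdot-\mathbf{x}_{n})$, the Hilbert identity for $R$, and the renormalization $\lim_{\mathbf{x}\to\mathbf{x}_{n}}\big[g_{n}(z;\mathbf{x})-\tfrac{1}{4\pi|\mathbf{x}-\mathbf{x}_{n}|}\big]=\tfrac{i\sqrt{z}}{4\pi}$ for the diagonal; the absolute convergence just obtained licenses passing these finite computations to the limit.

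The main obstacle is the bounded invertibility of $Q(z)+4\pi L$ for every non-real $z$ and every self-adjoint $L$, and this is the genuinely delicate point. Since $L=L^{*}$, the Nevanlinna identity with $z_{0}=\bar z$ gives $\mathrm{Im}\big(Q(z)+4\pi L\big)=\mathrm{Im}\,Q(z)=\mathrm{Im}(z)\,\Gamma(z)$, so it is enough to show that the Gram matrix is bounded \emph{below}, $\Gamma(z)\geq\epsilon I$; mere strict positivity would not suffice, since the imaginary part of a bounded operator may be injective without the operator being boundedly invertible. On the Fourier side $\|G(z)c\|^{2}$ is a positive constant multiple of $\int_{\mathbf{R}_{3}}\frac{|\sum_{n}c_{n}e^{-i\xi\cdot\mathbf{x}_{n}}|^{2}}{\big||\xi|^{2}-z\big|^{2}}\,d\xi$, and restricting the integral to a ball $|\xi|\leq\Lambda$ on which the weight is $\geq\delta>0$ reduces the desired bound to a lower Riesz--Fischer estimate $\int_{|\xi|\leq\Lambda}|\sum_{n}c_{n}e^{-i\xi\cdot\mathbf{x}_{n}}|^{2}\,d\xi\geq c(\Lambda)\|c\|^{2}$ for the exponential system attached to the separated set $\{\mathbf{x}_{n}\}$. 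Since~(\ref{dist}) bounds the upper density of $\{\mathbf{x}_{n}\}$ by $O(d^{-3})$, taking $\Lambda$ large enough makes $\{\mathbf{x}_{n}\}$ an interpolating set for the Paley--Wiener space of the ball $|\xi|\leq\Lambda$, which furnishes exactly this lower bound. With $\Gamma(z)\geq\epsilon I$ in hand, for $\mathrm{Im}\,z>0$ (the case $\mathrm{Im}\,z<0$ being symmetric) one gets $\|(Q(z)+4\pi L)u\|\geq\epsilon\,|\mathrm{Im}\,z|\,\|u\|$, and the same bound for the adjoint $Q(\bar z)+4\pi L$; hence $Q(z)+4\pi L$ is boundedly invertible and Theorem~\ref{kreinnext1} applies, showing that $R_{L}(z)$ in~(\ref{krein6a}) is the resolvent of a self-adjoint operator $A_{L}$.

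It remains to identify $A_{L}$ with $-\Delta$ subject to the stated boundary conditions. The representation $f=R_{L}(z)h=R(z)h-G(z)[Q(z)+4\pi L]^{-1}G(\bar z)^{*}h$ exhibits the decomposition $f=f_{0}+g$ with $f_{0}=R(z)h\in H_{2}^{2}(\mathbf{R}_{3})=\mathcal{D}(-\Delta)$ and $g\in\mathcal{N}=\overline{G(z_{0})\mathbf{l}_{2}}$; in particular the last hypothesis $\mathcal{D}(A)\cap\mathcal{N}=\{0\}$ of Theorem~\ref{kreinnext1} holds because every $u\in\mathcal{N}$ solves $(-\Delta-z_{0})u=\sum_{n}c_{n}\delta(\cdot-\mathbf{x}_{n})$, a sum of point sources that can belong to $L_{2}$ only when all $c_{n}=0$. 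Reading off, near each $\mathbf{x}_{m}$, the coefficient of the singular term $\tfrac{1}{4\pi\rho_{m}}$ and of the regular part of $f$ from this expansion and inserting $(Q(z)+4\pi L)[Q(z)+4\pi L]^{-1}=I$ reproduces the relations $\lim_{\rho_{m}\to0}\partial_{\rho_{m}}(\rho_{m}f)+\sum_{n}w_{mn}\lim_{\rho_{n}\to0}(\rho_{n}f)=0$ precisely as in~(\ref{simp3a}); the interchange of the limits with the infinite sum is again licensed by the uniform convergence provided by~(\ref{dist}). This yields the domain $\mathcal{D}_{L}$ and completes the proof.
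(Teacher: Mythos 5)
Your proposal is correct in substance, but note that the paper itself contains no proof of Theorem \ref{kreinmain1+}: the statement is imported from \cite{GHM}, accompanied only by the remark that such results ``follow directly from Theorem \ref{kreinnext1}.'' Your argument is precisely an implementation of that remark, so the fair comparison is with the invertibility arguments the paper actually writes out for the neighbouring theorems. The analytic core you had to supply --- and which the paper leaves entirely to the citation --- is the bounded invertibility of $Q(z)+4\pi L$ for an \emph{arbitrary} self-adjoint $L$, and your mechanism, a uniform lower bound $G(z)^{*}G(z)\geq\epsilon I$ obtained on the Fourier side from an Ingham--Kahane type inequality (a separated node set generates a Riesz sequence of exponentials in $L^{2}$ of a sufficiently large ball), is the right one: as you correctly observe, mere injectivity of the imaginary part would not suffice, and neither of the two devices used elsewhere in the paper is available here. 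In the proof of Theorem \ref{help1} invertibility comes from a Fredholm alternative ($-1$ is not an eigenvalue of the compact operator $J_{L}\cdot\tilde{Q}(z)$), which depends on the compactness produced by the hypotheses (\ref{cond1}) on $|L|^{-\frac{1}{2}}$ and is unavailable for general $L$ under (\ref{dist}); in the proof of Theorem \ref{subopf1} invertibility comes from diagonal dominance, $\|\tilde{Q}(z)\|<1$, which requires large $|w_{m}|$. So your Gram-matrix lower bound is genuinely different from both and is what the general statement needs. Two repairs are worth making. First, your justification ``since (\ref{dist}) bounds the upper density, taking $\Lambda$ large enough makes $\{\mathbf{x}_{n}\}$ an interpolating set'' cites the wrong mechanism: finite upper Beurling density does not imply interpolation (points may coalesce, and in dimension $\geq 2$ density conditions are not sufficient even for separated sets and a fixed ball); it is the separation constant $d$ itself that yields the Riesz-sequence bound once $\Lambda d$ is large, by the multidimensional Ingham--Kahane theorem --- which, incidentally, can be proved with the very packing estimate you set up, testing $\bigl|\sum_{n}c_{n}e^{-i\xi\cdot\mathbf{x}_{n}}\bigr|^{2}$ against $\psi=\mathbf{1}_{B_{\Lambda/2}}\ast\mathbf{1}_{B_{\Lambda/2}}$, whose Fourier transform is nonnegative and decays like $|\mathbf{x}|^{-4}$. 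Second, since the matrix $(w_{mn})$ may define an unbounded $L$, the estimate $\|(Q(z)+4\pi L)u\|\geq\epsilon\,|\mathrm{Im}\,z|\,\|u\|$ should be run for $u\in\mathcal{D}(L)$ together with the closedness of $Q(z)+4\pi L$ on that domain; injectivity of the adjoint $Q(\bar z)+4\pi L$ then gives bounded invertibility exactly as you indicate. With these points made explicit, the proof is complete and supplies details the paper omits.
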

 The above results on perturbations of the Laplace operator, which are reducible to a finite or infinite sum of zero-range potentials, contain nothing new compared with those collected quite long ago in books\cite {AGHH, AK}. However, they follow directly from Theorem 1 without recourse to the theory of self-adjoint extensions of symmetric operators (see\cite{Ad}).

Of course, the sparseness (\ref{dist}) of the set $\{\mathbf{x}_{n}\}_{n\in \mathbb{Z}} $ is not necessary for an expression of the form (\ref{simp3}) to represent the resolvent of a self-adjoint perturbation of the Laplace operator by an infinite sum of zero-range potentials. If (\ref{dist}) is not satisfied, then the operator functions $R_{L} (z)$, defined as in Theorem \ref{kreinmain1+}, may still be resolvents of close singular selfadjoint perturbations $A_{L}$ of $A$ for an account of the particular choice of parameters $L$. 
\begin{thm}\label{help1}
Let $\{\mathbf{x}_{n}\}_{n\in \mathbb{Z}}$ be a sequence of different points of $\mathbf{R}_{3}$ such that each compact domain of $\mathbf{R}_{3}$ contains a finite number of accumulation points of this set. Put
\begin{equation}\label{help2a}
\delta_{0}=|\mathbf{x}_{0}|; \; \delta_{n}=\underset{-n\leq j\neq k \leq n}{\min}|\mathbf{x}_{j}-\mathbf{x}_{k}|,  \; n=1,2,... 
\end{equation} 
Let $L$ be a selfadjoint operator in $\mathbf{l}_{2}$ such that the point $z=0$ belongs to the resolvent set of $L$ and and the Hermitian matrix $\left( b_{mn}\right)_{m,n\in \mathbb{Z}}$ of non-negative operator $|L|^{-\frac{1}{2}}$ in the canonical basis $\left\lbrace \mathbf{e}_{n}=\left(\delta_{nm} \right)_{m\in \mathbb{Z}} \right\rbrace_{n\in \mathbb{Z}} $ of $\mathbf{l}_{2}$ satisfies the conditions:
\begin{equation}\label{cond1}
\sum_{n\in \mathbb{Z}}\left(\sum_{m\in \mathbb{Z}}\left|b_{nm}\right|\right)^{2}<\infty, \quad  \sum_{n\in \mathbb{Z}}\left(\sum_{m\in \mathbb{Z}}\left|b_{nm}\right|\frac{1}{\delta_{m}}\right)^{2}<\infty.
\end{equation}
Then the operator function
\begin{equation}\label{krein7} \begin{array}{c}
		R_{L}(z)=R(z)-\sum_{m,n\in \mathbb{Z}}\left([ Q(z)+4\pi L]^{-1}\right)_{mn} \left(\cdot,g_{n}(\bar{z};\cdot) \right)g_{m}(z;\cdot) \\ =
	R(z)-\tilde{G}(z)\left([ \tilde{Q}(z)+4\pi J_{L}]^{-1}\right)\tilde{G}(\bar{z})^{*},\\ 	
		g_{n}(z;\mathbf{x})= R(z)\delta(\cdot -\mathbf{x}_{n})(\mathbf{x}), \\ \tilde{G}(z)\mathbf{h}=\sum_{n}\left(|L|^{-\frac{1}{2}}\mathbf{h},\mathbf{e}_{n}\right)g_{n}(z;\cdot), \, \mathbf{h}\in \mathbf{l}_{2},\\
		\tilde{Q}(z)=|L|^{-\frac{1}{2}}Q(z)|L|^{-\frac{1}{2}}, \quad J_{L}=L\cdot |L|^{-1}, \quad \mathrm{Im}z\neq 0,
	\end{array}
\end{equation}
is the resolvent of a close singular perturbation $A_{L}$ of the Laplace operator $A$ 
with the domain
\begin{equation}\begin{array}{c}
\mathcal{D}_{L}:=  \left\lbrace f: \, f=f_{0}+g, \, f_{0}\in {H}_{2}^{2}\left(\mathbf{R}_{3}\right),\right\rbrace \, g\in \mathcal{N},  \\

  \underset{\rho_{m}\rightarrow 0}{\lim}\left[\frac {\partial} {\partial\rho_{m}}\left(\rho_{m}f(\mathbf{x})\right)\right]+4\pi\sum\limits_{n\in \mathbb{Z}}^{}\left( L\mathbf{e}_{n},\mathbf{e}_{m}\right)  \underset
	{\rho_{n}\rightarrow 0}{\lim}\,[\rho_{n}\,f(\mathbf{x})] =0, \\ \rho_{n}=|\mathbf{x}-\mathbf{x}_{n}|, \quad n\in \mathbb{Z} .  \end{array} \end{equation}
\end{thm}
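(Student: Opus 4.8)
\emph{Proof strategy.} The plan is to verify that the triple $\tilde G(z)$, $\tilde Q(z)$ and $4\pi J_L$ from the second line of (\ref{krein7}) satisfies all the hypotheses of Theorem \ref{kreinnext1}, with $\tilde G,\tilde Q,4\pi J_L$ playing the roles of $G,Q,L$, and then to upgrade the conclusion to closeness by exhibiting $R_L(z)-R(z)$ as a trace-class operator. The bridge between the two representations in (\ref{krein7}) is the identity $Q(z)+4\pi L=|L|^{1/2}\bigl(\tilde Q(z)+4\pi J_L\bigr)|L|^{1/2}$, which holds because $|L|^{-1/2}L|L|^{-1/2}=L|L|^{-1}=J_L$ (all three of $L,|L|,J_L$ commute as functions of $L$); consequently $[Q(z)+4\pi L]^{-1}=|L|^{-1/2}[\tilde Q(z)+4\pi J_L]^{-1}|L|^{-1/2}$, which is exactly what turns the first line of (\ref{krein7}) into the second. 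Note also that $J_L$ is a self-adjoint involution, so $4\pi J_L$ is boundedly invertible.

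First I would establish the mapping properties of $\tilde G(z)$. Since the free resolvent (\ref{lapl}) is translation invariant, $\|g_n(z;\cdot)\|_{\mathbf L_2}=C(z)$ is independent of $n$; as $\tilde G(z)\mathbf e_k=\sum_n b_{nk}\,g_n(z;\cdot)$ the triangle inequality gives $\|\tilde G(z)\mathbf e_k\|\le C(z)\sum_n|b_{nk}|$. Because $(b_{mn})$ is Hermitian, the $k$-th column sum equals the $k$-th row sum $r_k:=\sum_n|b_{kn}|$, so the first condition in (\ref{cond1}) yields $\|\tilde G(z)\|_{\mathrm{HS}}^2=\sum_k\|\tilde G(z)\mathbf e_k\|^2\le C(z)^2\sum_k r_k^2<\infty$; thus $\tilde G(z)$ is Hilbert--Schmidt. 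The resolvent-type identity (\ref{res3}) for $\tilde G$ is inherited termwise from $g_n(z)=g_n(z_0)+(z-z_0)R(z)g_n(z_0)$; injectivity of $\tilde G(z)$ follows from the invertibility of $|L|^{-1/2}$ together with the linear independence of $\{g_n(z;\cdot)\}$ (their Coulomb singularities sit at the distinct points $\mathbf x_n$); and $H_2^2(\mathbf R_3)\cap\mathcal N=\{0\}$ holds since $H^2$-functions are continuous whereas every nonzero element of $\mathcal N$ carries such singularities.

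Next I would treat $\tilde Q(z)$. The properties $\tilde Q(z)^*=\tilde Q(\bar z)$ and $\tilde Q(z)-\tilde Q(z_0)=(z-z_0)\tilde G(\bar z_0)^*\tilde G(z)$ follow by conjugating the corresponding identities for $Q$ from (\ref{simp3}) by $|L|^{-1/2}$. The essential new step, where both parts of (\ref{cond1}) enter, is compactness of $\tilde Q(z)$. Splitting $Q(z)$ into its diagonal $\frac{i\sqrt z}{4\pi}I$ and off-diagonal parts, the diagonal contribution $\frac{i\sqrt z}{4\pi}|L|^{-1}$ is trace class because $|L|^{-1/2}$ is Hilbert--Schmidt (again the first condition, since $\sum_{p,q}|b_{pq}|^2\le\sum_p r_p^2$). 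For the off-diagonal part I use $|q_{mn}(z)|\le\frac{1}{4\pi|\mathbf x_m-\mathbf x_n|}\le\frac{1}{4\pi}\bigl(\delta_{|m|}^{-1}+\delta_{|n|}^{-1}\bigr)$, valid because any two points with indices in $[-k,k]$ are at least $\delta_k$ apart by (\ref{help2a}) (with $\delta_m:=\delta_{|m|}$). This yields $|(\tilde Q_{\mathrm{off}}(z))_{pq}|\le\frac{1}{4\pi}\bigl(s_p r_q+r_p s_q\bigr)$ with $s_p=\sum_m|b_{pm}|\delta_{|m|}^{-1}$, whence $\|\tilde Q_{\mathrm{off}}(z)\|_{\mathrm{HS}}^2\le\frac{1}{4\pi^2}\|r\|_{\mathbf l_2}^2\,\|s\|_{\mathbf l_2}^2<\infty$ by the two conditions in (\ref{cond1}); so $\tilde Q(z)$ is Hilbert--Schmidt, in particular compact. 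Bounded invertibility of $\tilde Q(z)+4\pi J_L$ then follows by a Fredholm argument: it is a compact perturbation of the boundedly invertible $4\pi J_L$, hence Fredholm of index zero, and if $(\tilde Q(z)+4\pi J_L)\mathbf f=0$ then, taking imaginary parts and using $\mathrm{Im}\,\tilde Q(z)=\mathrm{Im}z\,\tilde G(z)^*\tilde G(z)$ (the case $z_0=\bar z$ of the difference identity), one gets $\mathrm{Im}z\,\|\tilde G(z)\mathbf f\|^2=0$, so $\mathbf f=0$ by injectivity of $\tilde G(z)$. Theorem \ref{kreinnext1} now gives that $R_L(z)$ is the resolvent of a singular self-adjoint perturbation $A_L$ of $A$, and closeness is immediate from the factorization $R_L(z)-R(z)=-\tilde G(z)[\tilde Q(z)+4\pi J_L]^{-1}\tilde G(\bar z)^*$, a product of a Hilbert--Schmidt operator, a bounded operator and a Hilbert--Schmidt operator, hence trace class.

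Finally, the domain $\mathcal D_L$ is read off from the Krein formula exactly as in the finite and sparse cases (\ref{simp3a}) and Theorem \ref{kreinmain1+}, after undoing the $|L|^{\pm1/2}$ twist via the bridge identity above; the boundary coefficients $4\pi(L\mathbf e_n,\mathbf e_m)$ arise from matching the Coulomb asymptotics of $g_n$ at each $\mathbf x_n$. The main obstacle I expect is precisely the compactness estimate for $\tilde Q(z)$: without the sparseness hypothesis (\ref{dist}) the bare matrix $Q(z)$ need not even be bounded, so everything hinges on the distance bound $|\mathbf x_m-\mathbf x_n|\ge\delta_{\max(|m|,|n|)}$ and on (\ref{cond1}) being just strong enough to absorb the resulting $\delta_{|m|}^{-1}$ growth. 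A secondary technical point is the rigorous identification of $\mathcal D_L$ and the justification of the pointwise boundary conditions near accumulation points of $\{\mathbf x_n\}$, where the hypothesis that each compact domain contains only finitely many accumulation points is used.
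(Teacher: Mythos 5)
Your skeleton is the same as the paper's: verify the hypotheses of Theorem \ref{kreinnext1} for the triple $\tilde G(z)$, $\tilde Q(z)$, $J_L$; prove $\tilde G(z)$ and the off-diagonal part of $\tilde Q(z)$ are Hilbert--Schmidt using (\ref{cond1}) together with the bound $|\mathbf{x}_m-\mathbf{x}_n|^{-1}\le \delta_{|m|}^{-1}+\delta_{|n|}^{-1}$ from (\ref{help2a}); identify the diagonal part as $\frac{i\sqrt z}{4\pi}|L|^{-1}$ (trace class); invert $\tilde Q(z)+J_L$ by a Fredholm-alternative argument; and obtain closeness from the factorization of $R_L-R$ as Hilbert--Schmidt $\times$ bounded $\times$ Hilbert--Schmidt. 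But there is a genuine gap at the step your Fredholm argument rests on: injectivity of $\tilde G(z)$, which you claim follows from ``linear independence of $\{g_n\}$'' and the location of their Coulomb singularities. Linear independence only excludes the vanishing of \emph{finite} combinations, whereas $\tilde G(z)\mathbf{h}=\sum_n c_n g_n(z;\cdot)$ is an infinite sum; and if some $\mathbf{x}_m$ is itself an accumulation point of the remaining points (which the hypotheses allow -- this is precisely what distinguishes Theorem \ref{help1} from Theorem \ref{kreinmain1+}), the tail $\sum_{n\neq m}c_n g_n$ is not bounded near $\mathbf{x}_m$, so the singularity of $c_m g_m$ cannot be isolated. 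The paper closes this gap with an argument you do not have: by (\ref{bohr}) the coefficients satisfy $\sum_n|c_n|<\infty$, so the Fourier-side function $\hat{\mathbf{h}}(\mathbf{k})=\sum_n c_n e^{-i\mathbf{k}\cdot\mathbf{x}_n}$ appearing in (\ref{imp4}) is continuous (almost periodic), and each coefficient is recovered by the Bohr mean-value formula (\ref{bohr1}); hence $\|\tilde G(z)\mathbf{h}\|=0$ forces every $c_n=0$ and then $\mathbf{h}=0$ since $0$ is in the resolvent set of $L$.

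The same defect affects your verification of $\mathcal{N}\cap\mathcal{D}(A)=\{0\}$: the statement that ``every nonzero element of $\mathcal{N}$ carries singularities'' is exactly the kind of pointwise claim that fails to be justified near accumulation points, and is even less clear for limits of combinations. The paper avoids any pointwise description of elements of $\mathcal{N}$ and argues by duality ((\ref{lim-a})--(\ref{imp6})): if $R(z)w=\tilde G(z)\mathbf{h}$, one chooses a smooth compactly supported $\phi$ vanishing in a neighborhood of all the $\mathbf{x}_n$ and of their (locally finite, hence measure-zero) accumulation set, with $|(w,\phi)|\ge(1-\varepsilon)\|w\|^2$, and pairs with $(-\Delta-\bar z)\phi$; the $\tilde G$ side gives $0$ while the resolvent side gives $(w,\phi)$, forcing $w=0$. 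These two arguments -- the Fourier/Bohr mean-value argument and the test-function duality argument -- are where the hypotheses about accumulation points and conditions (\ref{cond1}) do their real work; without them (or substitutes of comparable strength) your proposal establishes the norm estimates but not the two kernel conditions of Theorem \ref{kreinnext1}, so it does not yet prove the theorem.
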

\begin{proof}

First of all, note that the mapping $\tilde{G}(z)$ from $\mathbf{l}_{2}$  to ${L}_{2}\left(\mathbf{R}_{3}\right)$  is a Hilbert-Schmidt operator. Indeed, for the canonical basis $\left\lbrace \mathbf{e}_{n} \right\rbrace_{n\in \mathbb{Z}} $ in $\mathbf{l}_{2}$  we have
\begin{equation*}\begin{array}{c}
		\left\|\tilde{G}(z)\mathbf{e}_{n}\right\|^{2}=\left(|L|^{-\frac{1}{2}}G(z)^{*}G(z) |L|^{-\frac{1}{2}}\mathbf{e}_{n},\mathbf{e}_{n}\right) \\=\sum_{m,m^\prime\in \mathbb{Z}}b_{nm^\prime}\cdot\frac{1}{8\pi}e^{-\mathrm{Im}\sqrt{z}|\mathbf{x}_{m}-\mathbf{x}_{m^\prime}|}\frac{\sin{\mathrm{Re}\sqrt{z}|\mathbf{x}_{m}-\mathbf{x}_{m^\prime}|}}{\mathrm{Re}\sqrt{z}|\mathbf{x}_{m}-\mathbf{x}_{m^\prime}|}\cdot b_{mn} \\
		\leq \frac{1}{8\pi}\sum\limits_{m,m^\prime\in \mathbb{Z}}|b_{nm}|\cdot |b_{m^\prime n}| \leq \frac{1}{8\pi}\left( \sum\limits_{m\in \mathbb{Z}}|b_{mn}|\right)^{2}<\infty. 
	\end{array}
\end{equation*}
and by virtue of (\ref{cond1}) we have
\begin{equation}
		\sum_{n\in \mathbb{Z}}\left\|\tilde{G}(z)\mathbf{e}_{n}\right\|^{2}
		\leq \frac{1}{8\pi}\sum_{n\in \mathbb{Z}}\left(\sum_{m\in \mathbb{Z}}|b_{mn}|\right)^{2}<\infty. 
	 \label{nucl}
\end{equation}
 The inequality (\ref{nucl}), however, is a criterion for an operator in $\mathbf{l}_{2}$ to belong to the Hilbert-Schmidt class. 

Turning to the operator function $\tilde{Q}(z)$, we represent the corresponding matrix function as the sum $D(z)+M(z)$ of matrices
\begin{equation}\label{diag1}
	D(z)=\frac{i\sqrt{z}}{4\pi}\left( \sum_{j\in \mathbb{Z}}b_{mj}b_{jn}\right)_{m,n\in\mathbb{Z}}   
\end{equation} 
and 
\begin{equation}\label{rest1}
		M(z)=\left(u_{mn}(z)=\sum_{n^\prime,m^\prime \in \mathbb{Z}}b_{n n^\prime}g(z;\mathbf{x}_{n^\prime}-\mathbf{x}_{m^\prime})b_{m^\prime m}, \, m^\prime\neq n^\prime \right)_{m,n\in \mathbb{Z}} .
\end{equation}
  	
  	Note that, according to the first assumption in (\ref{cond1}), $|L|^{-\frac{1}{2}} $ is a Hilbert-Schmidt operator, since $$\sum_{n,j\in \mathbb{Z}}b_{nj}b_{jn}=\sum_{n,j\in \mathbb{Z}}|b_{nj}|^{2} \leq \sum_{n\in \mathbb{Z}}\left( \sum_{j\in \mathbb{Z}}|b_{nj}|\right) ^{2}<\infty.$$
As follows, $D(z)=\frac{i\sqrt{z}}{4\pi}|L|^{-1}$ is a trace class operator.

Taking into account, further, that
\begin{equation}\label{imp1}
\begin{array}{c}
\left|\sum\limits_{m^\prime\neq n^\prime }g(z;\mathbf{x}_{n^\prime}-\mathbf{x}_{m^\prime})b_{m^\prime m}, \,   \right|\leq\sum\limits_{m^\prime\neq n^\prime }\frac{1}{4\pi\left| \mathbf{x}_{n^\prime}-\mathbf{x}_{m^\prime}\right|}|b_{m^\prime m}| \\ 
\leq \frac{1}{4\pi \delta_{n^\prime}}\sum\limits_{|m^{\prime}|\leq|n^{\prime}|}|b_{m^\prime m}|+\sum\limits_{|m^{\prime}|>|n^{\prime}|}\frac{1}{4\pi \delta_{m^\prime}}|b_{m^\prime m}|\leq \frac{1}{\delta_{n^\prime}}C^{(0)}(m)+C^{(1)}(m)_, \\
C^{(0)}(m)=\sum\limits_{j\in \mathbb{Z}}\left| b_{mj}\right|, \quad C^{(1)}(m)=\frac{1}{4\pi}\sum\limits_{j\in \mathbb{Z}}\left| b_{mj}\right|\frac{1}{\delta_j },
\end{array}
\end{equation} 	
and that by (\ref{cond1})
\begin{equation}\label{imp2}
	\sum_{m\in \mathbb{Z}}\left[C^{(0)}(m)\right]^{2}<\infty,    \quad \sum_{m\in \mathbb{Z}}\left[C^{(1)}(m)\right]^{2}<\infty, 
\end{equation}
we conclude that
\begin{equation}\label{imp3}
	\left|u_{nm} (z)\right|\leq \left[C^{(1)}(n)C^{(0)}(m)+C^{(1)}(m)C^{(0)}(n) \right] ,  
\end{equation}
and, therefore
\begin{equation}
\sum_{nm\in \mathbb{Z}}	\left|u_{nm} (z)\right|^{2}<\infty.
\end{equation}
We see that $M(z)$ is a Hilbert-Schmidt operator, and hence the sum $\tilde{Q}(z)=D(z)+M(z)$ is at least a Hilbert-Schmidt operator.

Remember now that $\tilde{Q}(z)+J_{L} $ is an operator function of the Nevanlinna class and that for any $\mathbf{h}\in \mathbf{l}_{2}$ and any non-real $z$ we have
\begin{equation}\label{imp4}\begin{array}{c}
\frac{1}{\mathrm{Im}z}\cdot \mathrm{Im}\left[\left(\tilde{Q}(z)\mathbf{h},\mathbf{h}  \right)+\left( J_{L}\mathbf{h},\mathbf{h} \right) \right] \\ =\frac{1}{\left( 2\pi\right)^{3} }\int_{\mathbf{R}_{3}} \frac{1}{\left(|\mathbf{k}|^{2}-\mathrm{Re}z \right)^{2} +\left( \mathrm{Im}z\right)^{2} }\left|\sum_{n\in \mathbb{Z}}\left(|L|^{-\frac{1}{2}}\mathbf{h},\mathbf{e}_{n} \right)e^{-\mathbf{k}\cdot\mathbf{x}_{n}}\right|^{2} d\mathbf{x}\geq 0
\end{array}		
\end{equation}	

The function 
\begin{equation}\label{imp5}
	\hat{\mathbf{h}}(\mathbf{k})=\sum_{n\in \mathbb{Z}}\left(|L|^{-\frac{1}{2}}\mathbf{h},\mathbf{e}_{n} \right)e^{-\mathbf{k}\cdot\mathbf{x}_{n}}, \; \mathbf{h}\in \mathbf{l}_{2},
\end{equation}
that emerges in (\ref{imp4}) is bounded and continuous. Indeed, setting
\begin{equation*}
    \mathbf{h}=\sum\limits_{m\in \mathbb{Z}}a_{m}\mathbf{e}_{m}
\end{equation*}
and, taking into account the equality $b_{nm}=\overline{b_{mn}}$ we see that
\begin{equation}\begin{array}{c}\label{bohr}
\sum\limits_{n\in \mathbb{Z}}\left| \left(|L|^{-\frac{1}{2}}\mathbf{h},\mathbf{e}_{n} \right)\right|= \sum\limits_{n\in \mathbb{Z}}\left|\sum\limits_{m\in \mathbb{Z}}b_{nm}a_{m}\right| \\ \leq\sum\limits_{m\in \mathbb{Z}} \left( \sum\limits_{n\in \mathbb{Z}}\left|b_{mn} \right| \right)\left|a_{m}\right|  \leq \sqrt{\sum\limits_{m\in \mathbb{Z}} \left( \sum\limits_{n\in \mathbb{Z}}\left|b_{mn} \right| \right)^{2}}\cdot \left\| \mathbf{h}\right\| <\infty.	
\end{array}
\end{equation} 
Note for an expanding family of domains $$ \mathbf{\Omega}_{K}=\left\lbrace\mathbf{k}:-K<k_{1},k_{2},k_{3}<K \right\rbrace \subset \mathbf{R}_{3} $$ and any $n\in \mathbb{Z}$ by virtue of (\ref{imp5}) and (\ref{bohr}) we have
\begin{equation}\label{bohr1}
\underset{K\rightarrow \infty}{\lim}\frac{1}{8K^{3}}\int_{\mathbf{\Omega}_{K}}e^{i\mathbf{k}\cdot\mathbf{x}_{n}}\hat{\mathbf{h}}(\mathbf{k})d\mathbf{k}=\left(|L|^{-\frac{1}{2}}\mathbf{h},\mathbf{e}_{n} \right), \, n\in \mathbb{Z}.
\end{equation}

Now we can conclude that the equality sign in (\ref{imp4}) is reached if and only if $h(k) \equiv 0$, and in view of (\ref{bohr1}), if and only if $|L|^{-\frac{1}{2}}\mathbf{h}=0$.  By our assumptions, the latter means that $\mathbf{h} = 0$. Therefore, "0" cannot be an eigenvalue of the operator $\tilde{Q}(z)+J_{L}, \, \mathrm{Im}z \neq 0$, and the same is true for the operator $J_{L}\cdot\tilde{Q}(z)+I$. In other words, "-1" is not an eigenvalue of the operator $J_{L}\cdot\tilde{Q}(z)$. But $J_{L}\cdot\tilde{Q}(z)$ is a compact operator. As follows, $J_{L}\cdot\tilde{Q}(z)+I$ boundedly invertible and so is the operator $\tilde{Q}(z)+J_{L}$, $$\left[\tilde{Q}(z)+J_{L} \right]^{-1}=J_{L}\cdot \left[J_{L}\cdot\tilde{Q}(z)+I \right]^{-1} .$$
We proved that for non-real $z$ the values of operator function $\left[\tilde{Q}(z)+J_{L} \right]^{-1}$ are bounded operators.   

Suppose further that for some nonreal $z$ there is a vector $f\in \mathbf{L}_{2}(\mathbf{R}_{3})$ from the linear set $\tilde{G}(z)\mathbf{l}_{2}$ that belongs to the domain $\mathfrak{D}(A)$ of the Laplace operator $A$. This vector as any vector from $\mathfrak{D}(A)$ can be represented in the form $f=R(z)w$ with some $w\in\mathbf{L}_{2}(\mathbf{R}_{3})$ while by our assumption there is a vectors $\mathbf{h}\in \mathbf{l}_{2}$ such that
\begin{equation}\label{lim-a}
	R(z)w-\tilde{G}(z)\mathbf{h}=0.
\end{equation}   

Now recall that for each $w\in\mathbf{L}_{2}(\mathbf{R}_{3})$ and any infinitesimal $\varepsilon>0$, it is possible to find an infinitely smooth compact function $\phi(\mathbf{x})$ which is also equal to zero at some $\eta$-neighborhood of all the accumulation points and isolated points of the set $\left\lbrace \mathbf{x}_{n} \right\rbrace$ on the support of $\phi(\mathbf{x})$to satisfy the condition   
\begin{equation}\label{lim-b}
	\left| \left(w,\phi \right)_{\mathbf{L}_{2}(\mathbf{R}_{3})} \right|\geq\left(1-\varepsilon \right)\left\| w \right\|^{2}_{\mathbf{L}_{2}(\mathbf{R}_{3}}.
\end{equation}
Taking into account further that for $\phi(\mathbf{r})$, as well as for any smooth compact function,
\begin{equation}\label{laplres}
	\phi(\mathbf{x})= \frac{1}{4\pi}
	\int_{\mathbf{R}_{3}}\frac{e^{i\sqrt{z}|\mathbf{x}-\mathbf{x}^\prime|}}{|\mathbf{x}-\mathbf{x}^\prime|}\left[ -\Delta\phi\left(\mathbf{x}^\prime \right)-z\phi\left(\mathbf{x}^\prime \right)\right]  d\mathbf{x}^\prime, 
\end{equation}
we notice that 
\begin{equation*} \begin{array}{c}
		\left(G(z)w,\left[-\Delta\phi -\bar{z}\phi \right]\right)_{\mathbf{L}_{2}(\mathbf{R}_{3})}=\left(w,\phi\right)_{\mathbf{L}_{2}(\mathbf{R}_{3})}, \\  \left(\tilde{G}(z)\mathbf{h},\left[-\Delta\phi -\bar{z}\phi \right]\right)_{\mathbf{L}_{2}(\mathbf{R}_{3})} =0, \quad \mathbf{h}\in \mathbf{l}_{2}.		
	\end{array}
\end{equation*}
Hence by virtue of (\ref{lim-b}) we conclude that 
\begin{equation}\label{lim-c}
	\begin{array}{c}
		\|R(z)w-\tilde{G}(z)\mathbf{h}\|_{\mathbf{L}_{2}(\mathbf{R}_{3})}\cdot
		\left\| -\Delta\phi-\bar{z}\phi  \right\|_{\mathbf{L}_{2}(\mathbf{R}_{3})} \\ \geq 
		\left|\left( \left[R(z)w-\tilde{G}(z)\mathbf{h}\right],\left[-\Delta\phi-\bar{z}\phi \right] 
		\right)_{\mathbf{L}_{2}(\mathbf{R}_{3})}\right|  =\left|\left(w,\phi\right)_{\mathbf{L}_{2}(\mathbf{R}_{3})}\right| \\ \geq\left(1-\varepsilon\right)
		\left\| w\right\|^{2}_{\mathbf{L}_{2}(\mathbf{R}_{3})}.
	\end{array}
\end{equation}
But given (\ref{lim-a})  the last inequality in (\ref{lim-c}) must necessarily be violated unless  $ w=0 $. Therefore  
\begin{equation}\label{imp6}
\tilde{G}(z)\mathbf{l}_{2} \cap \mathfrak{D}(A)=\{0\}.	
\end{equation}

From (\ref{imp6}) and the established properties of the operator function $R_{L}(z)$ it follows, in particular, that $\ker{R_L(z)}=\{0\}, \, \mathrm{Im}z\neq 0 $. Indeed by virtue of (\ref{imp6}) those properties the equality $$R(z)w-\tilde{G}(z)\left([ \tilde{Q}(z)+4\pi J_{L}]^{-1}\right)\tilde{G}(\bar{z})^{*}w=0, \; \mathrm{Im}z\neq 0, $$ for some $w\in \mathbf{L}_{2}(\mathbf{R}_{3}$ implies $w=0$ .

Verification of the remaining conditions of Theorem \ref{kreinnext1} for $R_{L}(z)$ is not difficult and is left to the reader.
\end{proof}

\section{Scatttering matrices}

From this point onward, we will assume that the spectrum of the selfadjoint operator $A$ is absolutely continuous. Unless stated otherwise, we will consider that the operator functions $G(z)$ and $Q(z)$ and the operator $L$ in the expression (\ref{krein5}) are such that all the conditions of Theorem \ref{kreinnext1} hold. Additionally, we will assume that the difference of the resolvent of selfadjoint perturbation $A_{L}$ of $A$ and the resolvent of $A$ is a trace-class operator. By virtue of these assumptions (see, for example,\cite{Ya}) the wave operators $A_{\pm}\left(A_{L},A \right)$ defined as strong limits
\begin{equation}\label{wave1}
\mathfrak{W}_{\pm}\left(A_{L},A \right) =s-\underset{t\rightarrow\pm\infty}{\lim}e^{iA_{L}t}e^{-iAt} 
\end{equation}
 exist and are isometric mappings of $\mathcal{H}$ onto the absolutely continuous subspace of $A_{L}$. Recall that in this case the wave operators $\mathfrak{W}_{\pm}\left(A_{L},A \right)$ are intertwining for the spectral functions $E_{\lambda}, E^{(L)}_{\lambda}, \, -\infty<\lambda<\infty , $  of the operators $A,A_{L}$ in the sense that $$\mathfrak{W}_{\pm}\left(A_{L},A \right)E_{\lambda}=E^{(L)}_{\lambda}\mathfrak{W}_{\pm}\left(A_{L},A \right).$$
 
   The scattering operator, which is defined as the  product of  wave operators   
 \begin{equation}\label{scat1}
  \mathfrak{S}(A_{L},A)=\mathfrak{W}_{+}\left(A_{L},A \right)^{*}\mathfrak{W}_{-}\left(A_{L},A \right)
   \end{equation} 
  is an isometric operator in $\mathcal{H}$ and
\begin{equation}\label{commut}
	 E_{\lambda}\mathfrak{S}(A_{L},A)=\mathfrak{S}(A_{L},A)E_{\lambda}, \; -\infty<\lambda<\infty .
\end{equation}
 
 Therefore for the representation of $A$ in 
$\mathcal{H}$ as the multiplication operator by $\lambda$ in the direct integral of Hilbert spaces 
$\mathfrak{h}(\lambda)$, $$\mathcal{H}\Rightarrow\int \limits_{-\infty}^{\infty}\oplus\mathfrak{h}(\lambda)d\lambda,$$
the scattering operator $\mathfrak{S}(A_{L},A)$ acts as the multiplication operator by a contractive operator  
function $S(A_{L},A)(\lambda)$, which will be below referred to as the scattering matrix.

These rather general assertions are specified below for close singular perturbations.
 
 Let $\sigma (A)$ denote the spectrum of $A$ and  $\Delta$ is some interval that $\subseteq\sigma(A)$.  We will assume that the part of $A$ on $E(\Delta)\mathcal{H}$ 
 has the Lebesgue spectrum of multiplicity $\mathfrak{n}\leq\infty$.  
 Then there exists an isometric operator $\mathfrak{F}$, which maps $E(\Delta)\mathcal{H}$ onto the space 
 $\mathbb{L}^{2}(\Delta;\mathcal{N})$ of vector function on $\Delta$ with values in the auxiliary Hilbert space $\mathcal{N}, \, \dim\mathcal{N}=\mathfrak{n},$  and such that 
 $\mathfrak{F}A|_{E(\Delta)\mathcal{H}}\mathfrak{F}^{-1}$ is the multiplication operator by independent variable in 
 $\mathbb{L}^{2}(\Delta;\mathcal{N})$. Using the notation 
 $$\mathfrak{F}(E(\Delta)f)(\lambda)=\mathbf{f}(\lambda), \quad f\in \mathcal{H}, \; \mathbf{f}(\cdot)\in  
 \mathbb{L}^{2}(\Delta;\mathcal{N}), $$ for any $f$ from the domain of $A$ we can write $$\mathfrak{F}\left(E(\Delta)Af\right)(\lambda)=\lambda\cdot\mathbf{f}(\lambda) $$
 and by (\ref{commut})  for any $f,g\in E(\Delta)\mathcal{H}$ we have
 \begin{equation*} \begin{array}{c}
 	(f,g)_{\mathcal{H}}=\int_{\Delta}\left(\mathbf{f}(\lambda),\mathbf{g}(\lambda) \right) _{\mathcal{N}}d\lambda, \\
 (\mathfrak{S}(A_{L},A)f,g)_{\mathcal{H}}=\int_{\Delta}\left(S(A_{L},A)(\lambda)\mathbf{f}(\lambda),\mathbf{g}(\lambda) \right) _{\mathcal{N}}d\lambda.	
\end{array} \end{equation*}
\begin{thm}\label{subopzer}
Let $A$ be a selfadjoint operator in Hilbert space $\mathcal{H}$ with an absolutely continuous spectrum, $A_{L}$ be its close selfadjoint perturbation, and the resolvent $R_{L}(z)$ of $A_{L}$ admits the representation (\ref{krein5}) with selfadjoint operator $L$ in Hilbert space $\mathcal{K}$,  functions $G(z)$ and $Q(z)$ with values being bounded operators from $\mathcal{K}$ to $\mathcal{H}$ and in $\mathcal{K}$, respectively satisfy all the conditions of Theorem \ref{kreinnext1}.  

Suppose additionally that \begin{itemize} 
\item {for fixed $\gamma>0$, some linearly independent system $\lbrace g_{k}\rbrace_{k\in \mathbb{Z}}\subset \mathcal{H}$ such that $\underset{k\in{\mathbb{Z}}}{\sup}\|g_{k}\|^{2}=M<\infty$, orthonormal system $\lbrace \mathbf{h}_{j}\rbrace_{j\in \mathbb{Z}}\subset \mathcal{K}$ and a set of numbers $\lbrace b_{jk}\rbrace_{jk\in \mathbb{Z}} $ such that $$\sum_{j\in\mathbb{Z}} \left(\sum_{k\in\mathbb{Z}}|b_{jk}|\right)^{2}<\infty $$
the operator $G(-i\gamma)$ admits the representation 
\begin{equation}\label{G1}
G(-i\gamma)=\sum_{j,k\in\mathbb{Z}}b_{jk}\left(\cdot\, ,\mathbf{h}_{j} \right)_{\mathcal{K}}g_{k} ;
\end{equation}}

\item{ on some interval $\Delta\subset\sigma(A)$ the operator $A$ has the Lebesgue spectrum of multiplicity $\mathfrak{n}\leq\infty$ and for the operator function $$\Gamma(z)=\left[|L|^{-\frac{1}{2}}Q(z)|L|^{-\frac{1}{2}}+J_{L} \right]^{-1}, \; J_{L}=L\cdot |L|^{-1}, \; \mathrm{Im}z\neq 0, $$
almost everywhere on $\Delta$ there is a weak limit
$$ \Gamma(\lambda +i0)=\underset{\varepsilon\downarrow 0}{\lim}\Gamma(\lambda +i\varepsilon), \; \lambda\in \Delta ,$$ such that
\begin{equation}\label{ess}
\underset{\lambda\in\Delta }{\mathrm{esssup}} \left\|\Gamma(\lambda +i0)\right\|<\infty.	
\end{equation}
 }
\end{itemize}
Then the image $\mathfrak{F}\mathfrak{S}(A_{L},A)\mathfrak{F}^{-1}$ of scattering operator acts in $\mathbb{L}^{2}(\Delta;\mathcal{N})$ as the multiplication by operator function
\begin{equation}\label{suboper1a}
\begin{array}{c}
	S(A_{L},A)(\lambda)=I-2i\sum\limits_{j,k\in\mathbb{Z}}\left( \Gamma(\lambda+i0)\mathbf{h}_{j}, \mathbf{h}_{k} \right)_{\mathcal{K}}\left(\cdot  ,	\hat{\mathbf{g}}_{j}(\lambda)\right)_{\mathcal{N}}\hat{\mathbf{g}}_{k}(\lambda),  \\ \hat{\mathbf{g}}_{j}(\lambda)=\sqrt{2\pi}(\lambda+i\gamma)\sum\limits_{l\in\mathbb{Z}}b_{jl}\left(\mathfrak{F}\mathbf{g}_{l}\right)(\lambda).
\end{array}
\end{equation} 				
\end{thm}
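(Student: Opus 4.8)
The plan is to reduce the problem to the stationary (Birman--Kuroda--Yafaev) representation of the scattering matrix and then to evaluate the on-shell $T$-operator in closed form from Krein's formula \eqref{krein5}, following and adapting \cite{AdP}. The summability $\sum_{j}(\sum_{k}|b_{jk}|)^{2}<\infty$ in \eqref{G1} together with $\sup_{k}\|g_{k}\|^{2}=M$ force $G(-i\gamma)$, and hence $R_{L}(z)-R(z)$, to be of trace class by the Hilbert--Schmidt estimate carried out in the proof of Theorem~\ref{help1}; thus the wave operators \eqref{wave1} exist and are complete, and $\mathfrak{S}(A_{L},A)$ acts in $\mathbb{L}^{2}(\Delta;\mathcal{N})$ as multiplication by a contraction $S(A_{L},A)(\lambda)$, so it remains only to identify this fibre operator. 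The starting identity (see \cite{Ya}) is
\[
S(A_{L},A)(\lambda)=I_{\mathcal{N}}-2\pi i\,\Pi(\lambda)\,\mathcal{T}(\lambda+i0)\,\Pi(\lambda)^{*},
\]
where $\Pi(\lambda)f=(\mathfrak{F}f)(\lambda)$ is the fibre (trace) map normalised by $\tfrac1\pi\mathrm{Im}\,R(\lambda+i0)=\Pi(\lambda)^{*}\Pi(\lambda)$, and $\mathcal{T}(z)$ is the $T$-operator defined through $R_{L}(z)-R(z)=-R(z)\mathcal{T}(z)R(z)$.

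Next I would put $\mathcal{T}(z)$ into a form adapted to \eqref{krein5}. Setting $z_{0}=-i\gamma$ in \eqref{res3} and using $(A-z)R(z)=I$ gives $G(z)=R(z)\mathcal{G}_{0}$ with the single $z$-independent operator $\mathcal{G}_{0}:=(A+i\gamma)G(-i\gamma)$, and dually $G(\bar z)^{*}=\mathcal{G}_{0}^{*}R(z)$. Comparing with $R_{L}(z)-R(z)=-G(z)[Q(z)+L]^{-1}G(\bar z)^{*}$ yields at once
\[
\mathcal{T}(z)=\mathcal{G}_{0}\,[Q(z)+L]^{-1}\,\mathcal{G}_{0}^{*}.
\]
Since $0\in\rho(L)$ one has the polar factorisation $Q(z)+L=|L|^{1/2}\big(|L|^{-1/2}Q(z)|L|^{-1/2}+J_{L}\big)|L|^{1/2}$ (using $|L|^{-1/2}L|L|^{-1/2}=J_{L}$), so that $[Q(z)+L]^{-1}=|L|^{-1/2}\Gamma(z)|L|^{-1/2}$ with $\Gamma$ exactly as in the statement. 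Because $A$ is multiplication by $\lambda$ on the fibres, $\Pi(\lambda)\mathcal{G}_{0}=(\lambda+i\gamma)\Pi(\lambda)G(-i\gamma)$; inserting \eqref{G1} and using orthonormality of $\{\mathbf{h}_{j}\}$ gives $\Pi(\lambda)G(-i\gamma)\mathbf{h}_{j}=\sum_{l}b_{jl}(\mathfrak{F}g_{l})(\lambda)$, which is precisely $\hat{\mathbf{g}}_{j}(\lambda)$ up to the scalar normalisation $\sqrt{2\pi}(\lambda+i\gamma)$. Expanding $\Gamma(\lambda+i0)$ in the system $\{\mathbf{h}_{j}\}$ then reassembles the double sum in \eqref{suboper1a}.

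I would then supply the analytic input pointwise in $\lambda$. From the second defining property of $Q$, $Q(z)-Q(\bar z)=2i\,\mathrm{Im}\,z\,G(z)^{*}G(z)$, and the Poisson-kernel limit $\varepsilon/\big((\mu-\lambda)^{2}+\varepsilon^{2}\big)\to\pi\delta(\mu-\lambda)$, one obtains $\mathrm{Im}\,Q(\lambda+i0)=\pi|\lambda+i\gamma|^{2}\,G(-i\gamma)^{*}\Pi(\lambda)^{*}\Pi(\lambda)G(-i\gamma)$, which is finite a.e.\ by \eqref{G1}; this is simultaneously the a.e.\ existence of the fibre traces $\hat{\mathbf{g}}_{j}(\lambda)$ and of their Gram matrix. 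The weak limit $\Gamma(\lambda+i0)$ exists with $\mathrm{esssup}\,\|\Gamma(\lambda+i0)\|<\infty$ by \eqref{ess}, so the operator series in \eqref{suboper1a} converges in norm for a.e.\ $\lambda$. Finally, the resolvent identity for $\Gamma$ gives $\mathrm{Im}\,\Gamma(\lambda+i0)=-\Gamma(\lambda+i0)\,\mathrm{Im}\,\tilde{Q}(\lambda+i0)\,\Gamma(\lambda+i0)^{*}$ with $\tilde{Q}(z)=|L|^{-1/2}Q(z)|L|^{-1/2}$; combined with the proportionality of $\tilde{G}^{*}\Pi^{*}\Pi\tilde{G}$ to $\mathrm{Im}\,\tilde{Q}$ just established, this reproduces $S S^{*}=I$ and, read backwards against the normalisation of $\mathfrak{F}$, pins down the numerical constants $\sqrt{2\pi}$ and $2i$ in \eqref{suboper1a}.

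The hardest part is the passage from the weak, resolvent-level stationary formula to the a.e.\ pointwise operator identity \eqref{suboper1a}: one must justify the limiting-absorption limit $z\to\lambda+i0$ for the sandwiched expression $\Pi(\lambda)\mathcal{G}_{0}[Q(z)+L]^{-1}\mathcal{G}_{0}^{*}\Pi(\lambda)^{*}$ and, at the same time, interchange this limit with the two infinite summations over $j,k$. Control of the former is exactly what \eqref{ess} provides, while the latter rests on the boundedness and continuity of $\hat{\mathbf{h}}(\mathbf{k})$ and on the $\ell^{1}$-type bound $\sum_{j}(\sum_{k}|b_{jk}|)^{2}<\infty$ established as in \eqref{bohr}, which make $\{\hat{\mathbf{g}}_{j}(\lambda)\}$ a norm-convergent system in $\mathcal{N}$ for a.e.\ $\lambda$ and legitimise termwise passage to the limit.
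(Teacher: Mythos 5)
You take a genuinely different route from the paper --- the stationary (Kato--Kuroda/Birman--Yafaev) representation rather than the paper's direct time-dependent computation --- but as written the argument is circular at its central step. The ``starting identity'' $S(\lambda)=I-2\pi i\,\Pi(\lambda)\mathcal{T}(\lambda+i0)\Pi(\lambda)^{*}$ cannot simply be cited from \cite{Ya} in the present setting: Yafaev proves stationary representations of this kind for additive perturbations $V=A_{L}-A$ admitting a suitable factorization (smooth or trace class), whereas here $A_{L}$ is a \emph{singular} perturbation, so no operator $V$ exists and the only given datum is that the resolvent difference is trace class. Moreover, the object your formula is built on, $\mathcal{G}_{0}=(A+i\gamma)G(-i\gamma)$, is not a well-defined operator into $\mathcal{H}$: the third hypothesis of Theorem \ref{kreinnext1} states precisely that $\overline{G(-i\gamma)\mathcal{K}}\cap\mathcal{D}(A)=\{0\}$ (in the Laplace model $\mathcal{G}_{0}$ would send $\mathbf{e}_{n}$ to the distribution $\delta(\cdot-\mathbf{x}_{n})$). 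Hence $\mathcal{T}(z)=\mathcal{G}_{0}[Q(z)+L]^{-1}\mathcal{G}_{0}^{*}$ and the sandwich $\Pi(\lambda)\mathcal{T}(\lambda+i0)\Pi(\lambda)^{*}$ exist only formally, between scale (rigged) spaces, and proving that the time-dependent scattering matrix coincides with the boundary value of this formal expression is not a quotable fact but the actual content of the theorem --- it is exactly what the paper establishes, and what \cite{BMN}, \cite{BMN1} establish in the boundary-triplet framework. Your proposal defers this to the cited identity, i.e., assumes what has to be proved.

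The paper's proof never forms $\mathcal{G}_{0}$: it starts from the Abel-limit representation \eqref{wave2} of the wave operators, writes the quadratic form of $\mathfrak{S}(A_{L},A)$ as an iterated limit of double spectral integrals of resolvent differences \eqref{scat2}, inserts Krein's formula through the increment identity \eqref{res3a} (the rigorous substitute for your $G(z)=R(z)\mathcal{G}_{0}$), so that only the bounded operator $G(-i\gamma)$ ever appears, and then performs the limit $\varepsilon,\eta\downarrow 0$ using the contractivity and strong convergence of the projections $\pi_{\pm}^{(\varepsilon)}$ onto the Hardy classes \eqref{gen}, finally removing the auxiliary square-integrability assumption \eqref{help3} by the density argument with $\mathbf{f}_{\delta}=(1+\delta\psi)^{-1/2}\mathbf{f}$. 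You correctly flag the limiting-absorption/interchange step as the hardest part, but \eqref{ess} and the $\ell^{1}$-type bound alone do not perform it; in the paper those bounds enter only after the Hardy-projection and density machinery has reduced the claim to estimates they can control. A final caution: if one does push your computation through, with the normalization $\hat{\mathbf{g}}_{j}=\sqrt{2\pi}(\lambda+i\gamma)\sum_{l}b_{jl}(\mathfrak{F}g_{l})(\lambda)$ one finds that $\mathrm{Im}\,\Gamma(\lambda+i0)^{-1}$ equals one \emph{half} of the Gram matrix of the $\hat{\mathbf{g}}_{j}$, so the stationary route yields $I-i\sum(\cdots)$ rather than $I-2i\sum(\cdots)$; the constants therefore do not ``pin down'' as you assert, and the unitarity check in the spirit of Proposition \ref{local2} --- not the assertion --- must be the arbiter.
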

\begin{proof}

Notice that the existence of strong limits in (\ref{wave1}) ensures the validity 
of relations 
\begin{equation}\label{wave2}\begin{array}{c}
		\mathfrak{W}_{\pm}\left(A_{1},A \right) =s-\underset{\varepsilon\downarrow 
			0}{\lim}\:\varepsilon\int\limits_{0}^{\infty}e^{-\varepsilon t}e^{\pm iA_{L}t}e^{\mp iAt} \\ 
		=s-\underset{\varepsilon\downarrow 0}{\lim}\:\varepsilon\int\limits_{0}^{\infty}e^{-\varepsilon t}e^{\pm iA_{L}t}\int 
		\limits_{-\infty}^{\infty}e^{\mp i\lambda t}dE_{\lambda} =s-\underset{\varepsilon\downarrow 0}{\lim}\:\pm i\varepsilon 
		\int \limits_{-\infty}^{\infty}R_{L}(\lambda \pm i\varepsilon)dE_{\lambda}.
	\end{array}
\end{equation}
By (\ref{wave2}) the quadratic form of $\mathfrak{S}(A_{L},A)$ for any $f_{1},f_{2}\, \in \mathcal{H}$ can be written as follows
  \begin{equation}\label{scat2}
  \begin{array}{c}
  \left(\mathfrak{S}(A_{L},A)f_{1},f_{2} \right)=\left(\mathfrak{W}_{-}\left(A_{L},A \right)f_{1},\mathfrak{W}_{+}\left(A_{L},A 
  \right)f_{2} \right) \\ =\underset{\varepsilon ,\eta\downarrow 
  0}{\lim}\:-\eta\varepsilon\iint\limits_{-\infty}^{\infty} \left(R_{L}(\lambda + i\varepsilon)dE_{\lambda}f_{1}, R_{L}(\mu 
  - i\eta)dE_{\mu}f_{2}h\right)  \\ = \underset{\varepsilon ,\eta\downarrow 
  0}{\lim}\:-\eta\varepsilon\iint\limits_{-\infty}^{\infty} \frac{1}{\lambda-\mu+i(\varepsilon-\eta)}\left(\left[ 
  R_{L}(\lambda + i\varepsilon)-R_{L}(\mu + i\eta)\right]dE_{\lambda} f_{1}, dE_{\mu}f_{2}\right).
  \end{array}
  \end{equation}
 To proceed to the limits in (\ref{scat2}) first we note that 
\begin{equation}\begin{array}{c}
 \mp i\varepsilon\,R(\lambda \pm i\varepsilon)dE_{\lambda}f=f,\quad \varepsilon>0, \;f\in\mathcal{H},\\
 -\eta\varepsilon\iint\limits_{-\infty}^{\infty} \frac{1}{\lambda-\mu+i(\varepsilon-\eta)}dE_{\mu}\left[ R(\lambda + 
 i\varepsilon)-R(\mu + i\eta)\right]dE_{\lambda}f \\
=-\eta\varepsilon\iint\limits_{-\infty}^{\infty}dE_{\mu}R(\mu + i\eta)R(\lambda + 
i\varepsilon)dE_{\lambda}f=f.
\end{array}
\end{equation}
Then we use Krein's formula (\ref{krein5}) taking into account that by (\ref{res3}) for the fixed  $\gamma > 0$ the equality 
\begin{equation}\label{res3a}
G(z)=G(-i\gamma)+(z+i\gamma)R(z)G(-i\gamma)
\end{equation}
is true. Thus for any $f_{1},f_{2}$ from the domain $\mathcal{D}_{A} $ we get the expression
\begin{equation}\label{scat4a}\begin{array}{c}
\left(\mathfrak{S}(A_{L},A)f_{1},f_{2} \right)=(f_{1},f_{2})-	\underset{\varepsilon ,\eta\downarrow 0}{\lim}\:\iint\limits_{-\infty}^{\infty} \frac{[\mu+i(\gamma+\eta)][\lambda-i(\gamma-\varepsilon)]}{\lambda-\mu+i(\varepsilon-\eta)} \\ \times \left(\left[\frac{i\eta}{\mu-\lambda-i\varepsilon}\Gamma(\lambda+i\varepsilon)-\frac{i\varepsilon}{\lambda-\mu-i\eta}\Gamma(\mu+i\eta) \right]G(-i\gamma)^{*}dE_{\lambda}f_{1},  G(-i\gamma)^{*}dE_{\mu}f_{2}\right) 
\end{array}
\end{equation}
But in accordance with (\ref{G1}) for any $f\in\mathcal{H}$ we have
\begin{equation}\label{G2}
	G(-i\gamma)^{*}f=\sum_{j\in\mathbb{Z}}\left(f,\tilde{g}_{j} \right)_{\mathcal{H}}\mathbf{h}_{j}, \quad \tilde{g}_{j}=\sum_{j\in\mathbb{Z}}b_{jk}g_{k} .
\end{equation}
Substituting (\ref{G2})  into (\ref{scat4a}) and assuming that $f_{1},f_{2}\in E(\Delta)\mathcal{H}\cap\mathcal{D}_{A} $ yields
\begin{equation}\label{scat4b}\begin{array}{c}
\left(\mathfrak{S}(A_{L},A)f_{1},f_{2} \right)=\int_{\Delta}\left(\mathbf{f}_{1}(\lambda),\mathbf{f}_{1}(\lambda) \right)_{\mathcal{N}} d\lambda  -	\underset{\varepsilon,\eta\downarrow 0}{\lim}\iint_{\Delta\times\Delta}d\lambda d\mu \\ \times \frac{[\lambda-i(\gamma-\varepsilon)][\mu+i(\gamma-\eta)]}{2\pi\left[\lambda-\mu+i(\varepsilon-\eta)\right](\lambda-i\gamma)(\mu+i\gamma)}  \sum_{k,j\in \mathbb{Z}}\left[\frac{i\eta}{\mu-\lambda-i\varepsilon}\Gamma_{kj}(\lambda+i\varepsilon)-\frac{i\varepsilon}{\lambda-\mu-i\eta}\Gamma_{kj}(\mu+i\eta) \right] \\ \times \left(\mathbf{f}_{1}(\lambda),\hat{\mathbf{g}}_{j}(\lambda) \right)_{\mathcal{N}}\cdot  \left(\hat{\mathbf{g}}_{k}(\mu), \mathbf{f}_{2}(\mu)\right)_{\mathcal{N}}, \quad \Gamma_{kj}(z)=\left( \Gamma(z)\mathbf{h}_{j},\mathbf{h}_{k}\right) .
\end{array}	
\end{equation} 

To pass to the limit in (\ref{scat4b}), recall that in the Hilbert space $\mathbb{L}^{2}(\mathbf{R};\mathcal{M})$ of vector functions $\mathbf{f}(\lambda)$ on $\mathbf{R}$ with values in some Hilbert space $\mathcal{M}$ for any $\varepsilon >0$ operators $$\left(\pi_{\pm}^{(\varepsilon)}\mathbf{f} \right)(\lambda)=\pm \frac{1}{2\pi i}\int_{\mathbf{R}}\frac{1}{\mu-\lambda\mp i\varepsilon}\mathbf{f}(\mu)d\mu  $$ are contractions and strong limits $\pi_{\pm}$ of this operators as $\varepsilon\downarrow 0$  are orthogonal projectors  onto the Hardy subspace $\mathbf{H}^{2}_{+}(\mathcal{M})$ in the upper half-plane and its orthogonal complement $\mathbf{H}^{2}_{-}(\mathcal{M})$, respectively. In particular, for any $\mathbf{f}\in \mathbb{L}^{2}(\mathbf{R};\mathcal{M}$ we have
\begin{equation}\label{gen}\begin{array}{c}
\underset{\eta\downarrow 0}{\lim}\underset{\varepsilon\downarrow 0}{\lim}\frac{1}{2\pi i}\int_{\mathbf{R}}\left[ \frac{1}{\mu-\lambda- i\varepsilon}-\frac{1}{\mu-\lambda- i(\varepsilon-\eta)}\right] \mathbf{f}(\mu)d\mu \\	= \underset{\eta\downarrow 0}{\lim}\left[(\pi_{+}\mathbf{f})(\lambda) +\left(\pi_{-}^{(\eta)}\mathbf{f} \right)(\lambda)\right]=\mathbf{f}(\lambda). 
\end{array}
\end{equation}

Let us assume for a moment that the scalar functions $\left(\mathbf{f}_{s}(\lambda),\hat{\mathbf{g}}_{j}(\lambda) \right)_{\mathcal{N}}, \, s=1,2,$ in (\ref{scat4b}) satisfy the condition
\begin{equation}\label{help3}
	\int_{\Delta}\sum_{j\in\mathbb{Z}}\left|\left(\mathbf{f}_{s}(\lambda),\hat{\mathbf{g}}_{j}(\lambda) \right)_{\mathcal{N}} \right|^{2}d\lambda<\infty. 
\end{equation}
  In other words, let us assume that vector functions 
  $$\mathbf{f}_{s}(\lambda)=\sum_{j\in\mathbb{Z}}\left(\mathbf{f}_{s}(\lambda),\hat{\mathbf{g}}_{j}(\lambda) \right)_{\mathcal{N}}\mathbf{h}_{j}$$
  belong to the space $\mathbb{L}^{2}(\Delta;\mathcal{K})$.	 
  In this case, using relations (\ref{gen}), one can directly carry out the passage to the limit in (\ref{scat4b}) and, as a result, obtain the expression
  \begin{equation}
  \begin{array}{c}
  	\left(\mathfrak{S}(A_{L},A)f_{1},f_{2} \right)=\int_{\Delta}\left(\mathbf{f}_{1}(\lambda),\mathbf{f}_{2}(\lambda) \right)_{\mathcal{N}} d\lambda \\ -	2i\int_{\Delta} \sum_{k,j\in \mathbb{Z}} \left( \Gamma(\lambda+i0)\mathbf{h}_{j},\mathbf{h}_{k}\right) _{\mathcal{K}} \left(\mathbf{f}_{1}(\lambda),\hat{\mathbf{g}}_{j}(\lambda) \right)_{\mathcal{N}}\cdot  \left(\hat{\mathbf{g}}_{k}(\lambda), \mathbf{f}_{2}(\lambda)\right)_{\mathcal{N}}d\lambda  .
  \end{array}
\end{equation}
It remains to verify that the set of vector functions from $\mathbb{L}^{2}(\Delta;\mathcal{N})$ satisfying the condition (\ref{help3}) is dense in $\mathbb{L}^{2}(\Delta;\mathcal{N})$. To this end let us take the scalar function 
$$\psi(\lambda)=\sum_{j\in \mathbb{Z}}\left\|\tilde{\hat{g}}_{j}(\lambda) \right\|^{2}_{\mathcal{N}} .  $$  
According to the conditions of the theorem, $\psi(\lambda)$ is integrable since
$$\begin{array}{c}
\sum\limits_{j\in \mathbb{Z}}\int_{\Delta}\left\|\hat{\mathbf{g}}_{j}(\lambda) \right\|^{2}_{\mathcal{N}}d{\lambda} =\sum\limits_{j\in \mathbb{Z}}\int_{\Delta}\sum\limits_{k^{\prime},k\in\mathbb{Z}}\bar{b}_{jk^{\prime}}b_{jk}\left(\mathbf{g}_{k}(\lambda),\mathbf{g}_{k^{\prime}}(\lambda) \right)d\lambda \\ \leq M\sum\limits_{j,k^{\prime},k\in\mathbb{Z}}|b_{jk^{\prime}}||b_{jk}|=M\left(\sum_{j,k\in\mathbb{Z}}\left|b_{jk} \right| \right)^{2}<\infty. \end{array}
$$
For any $\mathbf{f}\in\mathbb{L}^{2}(\Delta;\mathcal{N})$ and any $\delta>0$ set $$ \mathbf{f}_{\delta}=\frac{1}{\sqrt{1+\delta\psi(\lambda)}}\mathbf{f}(\lambda).$$ Then, on the one hand, $$\int_{\Delta}\sum\limits_{j\in\mathbb{Z}}\left|\left(\mathbf{f}_{\delta}(\lambda),\hat{\mathbf{g}}_{j}(\lambda) \right)_{\mathcal{N}} \right|^{2}d\lambda\leq \int_{\Delta}\frac{\psi(\lambda)}{1+\delta\psi(\lambda)}\left\|\mathbf{f}(\lambda) \right\|^{2}_{\mathcal{N}}d\lambda\leq\frac{1}{\delta}\left\|\mathbf{f} \right\|^{2}  <\infty,$$ and on the other, it is obvious that $$\underset{\delta\downarrow 0}{\lim}\left\|\mathbf{f}-\mathbf{f}_{\delta} \right\| =0. $$	
\end{proof}
By our assumption the scattering operator $\mathfrak{S}(A_{1},A)$ is unitary. Therefore, the scattering matrix $S(A_{1}, A)(\lambda)$ is to be unitary almost everywhere on $\Delta$. Notice, that the fact that $$S(A_{1},A)(\lambda)^{*}S(A_{1},A)(\lambda)=I$$ follows directly from the relation
\begin{equation}\label{gramm}\begin{array}{c}
	\frac{1}{2i}\left(  \Gamma(\lambda+i0)^{-1}-\left[\Gamma(\lambda+i0)^{*}\right]^{-1}\right)_{j,k\in\mathbb{Z}} \\ =\frac{1}{2i}\left(  |L|^{-\frac{1}{2}}Q(\lambda+i0)|L|^{-\frac{1}{2}}-|L|^{-\frac{1}{2}}Q(\lambda+i0)^{*}|L|^{-\frac{1}{2}}\right)_{j,k\in\mathbb{Z}} \\ =\pi\left(\left(\hat{\mathbf{g}}_{j}(\lambda),\hat{\mathbf{g}}_{k}(\lambda) \right)_{\mathcal{N}} \right)_{j,k\in\mathbb{Z}}.
\end{array} 
\end{equation}
and general 

\begin{prop}[\cite{AdP1}]
\label{local2}
	Let $g_{1},...,g_{N}$ be a set of linearly independent vectors of the Hilbert space $ \mathcal{N}$, $\Upsilon=(\gamma_{nm})_{1}^{N}$ is the corresponding Gram-Schmidt matrix and $\Lambda$ is any Hermitian $N\times N$ - matrix such that the matrix $\Lambda+i\Upsilon$ is invertible. Then the matrix 
	\begin{equation}\label{local1}
		\Omega=I-2i \sum\limits_{n,m=1}^{N}\left([\Lambda+i\Upsilon)]^{-1}\right)_{nm}(.\, ,\, g_{n})g_{m}
	\end{equation}
	is unitary.
\end{prop}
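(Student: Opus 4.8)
The plan is to regard $\Omega$ as a finite-rank perturbation of the identity and to reduce its unitarity to the single algebraic fact that the anti-Hermitian part of $\Lambda+i\Upsilon$ is $i\Upsilon$. First I would introduce the operator $G\colon\mathbb{C}^{N}\to\mathcal{N}$ defined by $G\mathbf{c}=\sum_{k=1}^{N}c_{k}g_{k}$, with adjoint $(G^{*}f)_{k}=(f,g_{k})$ and Gramian $G^{*}G$ whose entries are the inner products $(g_{m},g_{n})$; linear independence of $g_{1},\dots,g_{N}$ makes $G$ injective and $G^{*}G$ positive definite, and $G^{*}G$ coincides with the Gram matrix $\Upsilon$ up to the conjugation fixed by the inner-product convention. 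Writing $C=[\Lambda+i\Upsilon]^{-1}$, the double sum in (\ref{local1}) is exactly $GC^{\mathrm{T}}G^{*}$ (the transpose arising because the coefficient index $nm$ is paired with $g_{n}$ in the functional and $g_{m}$ in the vector), so that $\Omega=I-2i\,GC^{\mathrm{T}}G^{*}$ and $\Omega^{*}=I+2i\,G\overline{C}\,G^{*}$. Since this finite-rank part maps into $\mathcal{M}=\mathrm{span}\{g_{1},\dots,g_{N}\}$ and annihilates $\mathcal{M}^{\perp}$, both $\Omega$ and $\Omega^{*}$ reduce to the identity on $\mathcal{M}^{\perp}$ and leave $\mathcal{M}$ invariant; it therefore suffices to prove $\Omega^{*}\Omega=I$, for an isometry that is the identity off the finite-dimensional subspace $\mathcal{M}$ is automatically unitary.

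Next I would expand, using $G^{*}G$ in the middle of the product $(G\overline{C}G^{*})(GC^{\mathrm{T}}G^{*})$,
\[
\Omega^{*}\Omega=I+2i\,G\bigl[(\overline{C}-C^{\mathrm{T}})-2i\,\overline{C}\,(G^{*}G)\,C^{\mathrm{T}}\bigr]G^{*}.
\]
Because $G$ is injective, $GMG^{*}=0$ forces $M=0$, so $\Omega^{*}\Omega=I$ is equivalent to the matrix identity $\overline{C}-C^{\mathrm{T}}=2i\,\overline{C}\,(G^{*}G)\,C^{\mathrm{T}}$. Multiplying this on the left by $\overline{C}^{-1}$ and on the right by $(C^{\mathrm{T}})^{-1}=(C^{-1})^{\mathrm{T}}$ and cancelling the resulting factors, it collapses to $(C^{-1})^{\mathrm{T}}-\overline{C^{-1}}=2i\,G^{*}G$.

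The only substantive step is the verification of this last identity, and here the hypotheses on $\Lambda$ and $\Upsilon$ enter. Transposing it and using $\overline{G^{*}G}=\Upsilon$ (the inner-product convention, the Gram matrix being Hermitian), the identity is equivalent to $C^{-1}-(C^{-1})^{*}=2i\,\Upsilon$; substituting $C^{-1}=\Lambda+i\Upsilon$ and invoking $\Lambda^{*}=\Lambda$, $\Upsilon^{*}=\Upsilon$ gives $(C^{-1})^{*}=\Lambda-i\Upsilon$ and hence $C^{-1}-(C^{-1})^{*}=2i\,\Upsilon$, exactly as required. In other words, the whole computation rests on the single observation that $i\Upsilon$ is the anti-Hermitian part of $\Lambda+i\Upsilon$, which holds by construction and is the precise analogue of the relation (\ref{gramm}) used in the preceding argument. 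Assembling these steps gives $\Omega^{*}\Omega=I$, and by the reduction of the first paragraph $\Omega$ is unitary. I expect no conceptual obstacle; the one delicate point is bookkeeping, namely keeping the transposes and complex conjugations consistent with the chosen inner-product and Gram-matrix conventions, which is exactly what forces the conjugate $\overline{C}$ and the factor $G^{*}G$ to sit where they do above.
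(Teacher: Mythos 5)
Your proof is correct, and it is in substance the very ``direct calculation'' that the paper declines to write out (the text states only that unitarity ``is verified by direct calculation, which is left to the reader''), organized cleanly: writing $\Omega=I-2i\,GC^{\mathrm{T}}G^{*}$, reducing unitarity to the matrix identity $C^{-1}-(C^{-1})^{*}=2i\Upsilon$, and observing that this is precisely the statement that $i\Upsilon$ is the anti-Hermitian part of $\Lambda+i\Upsilon$ --- the finite-dimensional analogue of the relation (\ref{gramm}). Two details you handled that the paper glosses over are genuinely necessary and correct: the surjectivity argument (since $\mathcal{N}$ may be infinite-dimensional, $\Omega^{*}\Omega=I$ alone does not give unitarity, and your observation that $\Omega$ acts as the identity on $\mathcal{M}^{\perp}$ and injectively on the finite-dimensional $\mathcal{M}$ closes this), and the Gram-matrix convention $\gamma_{nm}=(g_{n},g_{m})$, consistent with (\ref{gramm}), without which the displayed identity would acquire a spurious transpose.
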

The unitarity of the matrix $\Omega$ is verified by direct calculation, which is left to the reader. The same calculation can be repeated, and hence, \textit{ the statement of Proposition \ref{local2}  remains valid in the case of infinite sequences of linearly independent vectors $\{g_{j}\}_{j\in\mathbb{Z}} $ from $\mathcal{N}$  if 
\begin{itemize}
    \item {the closures $\hat{\Lambda}$ and $\hat{\Upsilon}$ of operators in $\mathbf{l}_{2} $ formally given as infinite \newline Hermitian matrices $\Lambda$ and $\Upsilon$ are selfadjoint operators;}
    \item{the domain of $\hat{\Upsilon}$ belongs to the domain of $\hat{\Lambda}$; }
    \item{the operator $\hat{\Lambda}+i\hat{\Upsilon}$ has bounded inverse.}
\end{itemize}
} 

\subsection{Clarification for perturbations of the Laplace operator by infinite sums of zero-range potentials}

Let $\mathbf{S}_{2}$ be the unit sphere in $\mathbf{R}_{3}$. The unitary mapping
\begin{equation}\label{four1}
\begin{array}{c}
\left(\mathfrak{F}f\right)(\lambda,\mathbf{n})=
\frac{\sqrt[4]{\lambda}}{\sqrt{2}(2\pi)^{\frac{3}{2}}} 
\int_{\mathbf{R}_{3}}f(\mathbf{x})e^{-i\sqrt{\lambda}(\mathbf{n}\cdot\mathbf{x})}d\mathbf{x}, 
\\ f\in \mathbf{L}_{2}\left(\mathbf{R}_{3}\right), \, \mathbf{n}\in \mathbf{S}_{2},
\end{array}
\end{equation} 
of $\mathbf{L}_{2}\left(\mathbf{R}_{3}\right)$ onto the space of vector function $\mathbf{L}_{2}\left(\mathbf{R}_{+};\mathbf{L}_{2}\left(\mathbf{S}_{2}\right)\right)$ with values from the Hilbert space $\mathbf{L}_{2}\left(\mathbf{S}_{2}\right)$ transforms the standard Laplace operator 
$A=-\Delta$ in $\mathbf{L}_{2}(\mathbf{R}_{3})$ into the selfadjoint operator of multiplication by the independent variable $\lambda$ in $\mathbf{L}_{2}\left(\mathbf{R}_{+};\mathbf{L}_{2}\left(\mathbf{S}_{2}\right)\right)$. Accordingly, the operator $A$ has the uniform Lebesgue spectrum of infinite multiplicity, filling the half-axis $[0, \infty)$ on the complex plane. At that, under the mapping $\mathfrak{F}$ the resolvent $R(z)$ of $A$ turns into the multiplication operator by the scalar function $(\lambda-z)^{-1}$ in $\mathbf{L}_{2}\left(\mathbf{S}_{2}\right)$, and the functions $g_{m}(z;\mathbf{x})$, appearing in Krein's formula (\ref{krein7}), are transformed into vector functions  
\begin{equation}\label{subop3D2a}
\mathbf{q}_{m}(\lambda,\mathbf{n})=\frac{\sqrt[4]{\lambda}}{4\pi}\frac{e^{-i\sqrt{\lambda}\left(\mathbf{n}\cdot\mathbf{x}_{m} \right)}}{\lambda-z}.
\end {equation}
Let $\left\lbrace \mathbf{x}_{j}\right\rbrace _{j\in\mathbb{Z}_{+}}$ be a sequence of different points of $\mathbf{R}_{3}$ that may have only a finite number of accumulation points in compact domains of $\mathbf{R}_{3}$ and an  $\left\lbrace \eta_{m}\right\rbrace _{m\in\mathbb{Z}_{+}}$ be the associated sequence of positive numbers $$ \eta_{m}=\underset{1\leq j\neq k\leq m}{\min}\left|\mathbf{x}_{j}-\mathbf{x}_{k}\right |, \, m=1,2,..., \, $$
We will also relate to $\left\lbrace \mathbf{x}_{j}\right\rbrace _{j\in\mathbb{Z}_{+}}$ the sequence positive-valued functions $\mu_{m}(\lambda)$ on the half-axe $\lambda>0$ which are the least eigenvalues for the sequence of Gramm-Schmidt matrices $\mathfrak{G}_{N}(\lambda )$ for the sets of functions
\begin{equation}\label{subop3D2ab}
\mathbf{e}_{j}(\lambda,\mathbf{n})=\frac{\sqrt[4]{\lambda}}{4\pi}e^{-i\sqrt{\lambda}\left(\mathbf{n}\cdot\mathbf{x}_{j} \right)}, \, j=1,...,N,.... 
\end {equation}
in the space $\mathbf{L}_{2}\left(\mathbf{S}_{2}\right)$, 
\begin{equation}\begin{array}{c}
\mathfrak{G}_{N}(\lambda )=\left(\left( \mathbf{e}_{m}(\lambda,\cdot),\mathbf{e}_{m^{\prime}}(\lambda,\cdot)\right)_{\mathbf{L}_{2}\left(S_{2}\right) } \right)_{m,\prime{m}=1}^{N} \\ 

= \left( \frac{\sin{\sqrt{\lambda}|\mathbf{x_{m}}-\mathbf{x}_{m^{\prime}}|}}{4\pi|\mathbf{x_{m}}-\mathbf{x}_{j^{\prime}}|}\left[1- \delta_{jj^{\prime}}\right]+\frac{\sqrt{\lambda}}{4\pi}\delta_{m^{\prime}} \right)_{m,m^{\prime}=1}^{N} 
\end{array}
\end{equation}
It is worth mentioning that
\begin{equation}\label{GS1}
    \mu_{N}(\lambda)=\|\mathfrak{G}_{N}(\lambda )^{-1}\|^{-1}.
\end{equation}

Note that  $\mu_{N}(\lambda)>0$ since\textit{ for any sequence of different points $\left\lbrace \mathbf{x}_{j}\right\rbrace _{j\in\mathbb{Z}_{+}}$ of $\mathbf{R}_{3}$ the functions $\mathbf{e}_{j}(\lambda,\mathbf{n})$ are linearly independent in the space of continuous functions $\mathbf{C}\left(\mathbf{S}_{2}\right)$ and as follows in the space $\mathbf{L}_{2}\left(\mathbf{S}_{2}\right)$.}

The last assertion is a simple generalization of the following, most likely known. 
\begin{prop}
 Let $\left\lbrace\mathbf{x}_{j}\right\rbrace_{j\in\mathbb{Z}_{+}}$ be a sequence of different points on the plain $\mathbf{R}_{2}$ with polar coordinates $\left(r_{j}\geq 0,\varphi\in[0,2\pi) \right)$, respectively. Then, for any $a>0$ the functions
 \begin{equation}\begin{array}{c}
w_{j}\zeta)=e^{iar_{j}\cos{\left(\theta-\varphi_{j}\right)}}=e^{iar_{j}\mathrm{Re}\zeta\cdot\bar{\zeta_{j}}}, \\\zeta=e^{i\theta}, \,\theta\in[0,2\pi], \quad \zeta_{j}=e^{i\varphi_{j}}, \end{array}
\end{equation}
 are linearly independent in the space of continuous function $\mathbf{C}([0,2\pi)])$.
 \end{prop}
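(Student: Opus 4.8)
The plan is to reduce the linear independence to a statement about a single holomorphic function on the punctured plane and then read off its coefficients from the growth at infinity. Since linear independence concerns finite combinations, I would suppose that a nontrivial finite combination $\sum_{j\in S}\alpha_j w_j$ vanishes identically on the circle (with $S$ finite and $\alpha_j\neq 0$ for $j\in S$) and derive a contradiction. The first step is to recognize each $w_j$ as the boundary value of a function holomorphic on $\mathbb{C}\setminus\{0\}$: on $|\zeta|=1$ one has $\bar\zeta=\zeta^{-1}$, so $\mathrm{Re}(\zeta\bar\zeta_j)=\tfrac12(\bar\zeta_j\zeta+\zeta_j\zeta^{-1})$, and therefore $w_j$ agrees on the unit circle with
\[
W_j(\zeta)=\exp\!\Big(\tfrac{iar_j}{2}\big(\bar\zeta_j\,\zeta+\zeta_j\,\zeta^{-1}\big)\Big),
\]
whose exponent is a Laurent polynomial, so $W_j$ is holomorphic on $\mathbb{C}\setminus\{0\}$. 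Consequently $F(\zeta)=\sum_{j\in S}\alpha_j W_j(\zeta)$ is holomorphic on $\mathbb{C}\setminus\{0\}$ and vanishes on the unit circle; since the circle accumulates inside that domain, the identity theorem forces $F\equiv 0$ on all of $\mathbb{C}\setminus\{0\}$.

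Next I would extract the coefficients by testing $F$ along rays to infinity. Writing $\mathbf{x}_j=(x_j,y_j)=r_j(\cos\varphi_j,\sin\varphi_j)$ and evaluating on $\zeta=te^{i\psi}$ with $t\to+\infty$, the contribution $\tfrac{iar_j}{2}\zeta_j\zeta^{-1}$ tends to $0$, while $\big|\exp(\tfrac{iar_j}{2}\bar\zeta_j\zeta)\big|=\exp\!\big(-\tfrac{at}{2}\,r_j\sin(\psi-\varphi_j)\big)=\exp\!\big(-\tfrac{at}{2}\,\mathbf{x}_j\cdot\mathbf{v}(\psi)\big)$, where $\mathbf{v}(\psi)=(\sin\psi,-\cos\psi)$ sweeps out all unit directions. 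As $S$ is finite and the $\mathbf{x}_j$ are distinct, I can choose $\psi$ so that $\mathbf{x}\mapsto\mathbf{x}\cdot\mathbf{v}(\psi)$ is injective on $\{\mathbf{x}_j\}_{j\in S}$; let $j_0$ be the unique index minimizing $\mathbf{x}_j\cdot\mathbf{v}(\psi)$, so that $W_{j_0}$ strictly dominates every other $W_j$ along this ray. Dividing $F(te^{i\psi})=0$ by $\exp(\tfrac{iar_{j_0}}{2}\bar\zeta_{j_0}te^{i\psi})$ and letting $t\to+\infty$, each term with $j\neq j_0$ carries a modulus factor $\exp\!\big(-\tfrac{at}{2}(\mathbf{x}_j-\mathbf{x}_{j_0})\cdot\mathbf{v}(\psi)\big)\to0$, while the $j_0$-term tends to $\alpha_{j_0}$ (the remaining $\zeta^{-1}$-exponential tends to $1$). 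Hence $\alpha_{j_0}=0$, contradicting $j_0\in S$; so $S=\emptyset$ and the $w_j$ are linearly independent.

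The two substantive steps are the holomorphic-continuation reduction and the ray asymptotics; everything else is bookkeeping, and the generic choice of $\psi$ is immediate because $S$ is finite. I expect the main obstacle to be the clean justification of the limit: one must verify that, after normalizing by the dominant exponential, the off-diagonal terms vanish \emph{uniformly} in the oscillatory phase — which follows from the strict inequalities $(\mathbf{x}_j-\mathbf{x}_{j_0})\cdot\mathbf{v}(\psi)>0$ for $j\neq j_0$ together with the boundedness of the $\zeta^{-1}$-contribution for $t\geq1$ — and that the surviving term converges to exactly $\alpha_{j_0}$. The same scheme proves the generalization invoked in the text for $\mathbf{e}_j(\lambda,\mathbf{n})$ on $\mathbf{S}_2$: the plane wave $e^{-i\sqrt\lambda(\mathbf{n}\cdot\mathbf{x}_j)}$ again continues holomorphically in a directional parameter, and the extremal-projection argument isolates one coefficient at a time.
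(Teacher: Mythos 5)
Your proof is correct and takes essentially the same route as the paper's: continue each $w_{j}$ to the holomorphic function $\exp\bigl(\tfrac{iar_{j}}{2}(\bar{\zeta}_{j}\zeta+\zeta_{j}\zeta^{-1})\bigr)$ on $\mathbb{C}\setminus\{0\}$, use the uniqueness (identity) theorem to propagate the vanishing off the unit circle, and then eliminate coefficients one at a time via exponential domination along a ray to infinity. The only cosmetic difference is that you pick a generic ray direction making the projection $\mathbf{x}\mapsto\mathbf{x}\cdot\mathbf{v}(\psi)$ injective on the finite support, while the paper rotates coordinates so that a point of maximal modulus lies on the positive semi-axis and evaluates along $z=-i\tau$.
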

\begin{proof}
Suppose that there are complex numbers $\alpha_{1},...,\alpha_{N}$ such that the linear combination
\begin{equation}\label{lind}
W(\zeta)=\sum\limits_{j=1}^{N}\alpha_{j}w_{j}(\zeta)\equiv 0, \, \,|\zeta|=1.
\end{equation}
Note that each function $w_{j}(\zeta)$ the boundary value on the unit circle of the function
\begin{equation}
    \hat{w}_{j}(z)=\exp{\left[a\left(\frac{z}{2\zeta_j}+ \frac{\zeta_j}{2z}\right)\right]},
\end{equation}
which is holomorphic in the open complex plane with pierced origin.
So, the uniqueness theorem for holomorphic functions and the identity (\ref{lind}) provide  
\begin{equation}\label{lind2}
   \hat{W}(z)=\sum\limits_{j=1}^{N}\alpha_{j}\hat{w}_{j}(z)\equiv 0, \, \, z\neq 0.
\end{equation}
On the other hand, without loss of generality, we can assume that point $\mathbf{x}_{N}$ lies on the positive semi-axis and that $r_{N}\geq r_{j},\, 1\leq j\leq N-1 $. Then, taking into account the identity (\ref{lind2}) and that the points $\mathbf{x}_{1},...,\mathbf{x}_{N}$ are different, we conclude that in the combination (1) the coefficient
\begin{equation*}
    \alpha_{N}=\underset{\tau\rightarrow \infty}{\lim} \exp\left({-\frac{ar_{N}}{2}}\tau\right)\hat{W}(-i\tau)=0.
\end{equation*}
Proceeding in the same way, one can verify that all the remaining coefficients in (\ref{lind}) are also zeros.
\end{proof}
Let $\left(w_{m}\right)_{m\in\mathbb{Z}_{+}}$  be a sequence of non-zero real numbers such that 
\begin{equation} \label{conddis}
K_{0}=\sum_{m \in \mathbb{Z}_{+}}\frac{1}{|w_{m}|} < \infty, \quad K_{1}=\sum_{m \in \mathbb{Z}_{+}}\frac{1}{\eta_{m}^{2}|w_{m}|} < \infty.
\end{equation}
Setting
\begin{equation*}
    M_{N}([a,b])=\underset{\lambda\in [a,b]}{\max}\|\mathfrak{G}_{N}(\lambda )^{-1}\|, \, 0<a<b<\infty, 
\end{equation*}
we will assume in addition, that the given sequence $\left(w_{m}\right)_{m\in\mathbb{Z}_{+}}$ is such  for any segment $[a,b]$ of the half-axis $(0,\infty)$ we have 
\begin{equation} \label{conddis1}
\underset{N\rightarrow\infty}{\lim} M_{N}([a,b])\sum\limits_{m = N+1}^{\infty} \frac{1}{\eta_{m}^{2}|w_{m}|} =0.
\end{equation}

By Theorem \ref{help1},  Krein’s formula (\ref{krein7}) for the points $\left\lbrace \mathbf{x}_{m}\right\rbrace$ and the diagonal matrix $L=\left(w_{m}\delta_{mn}\right)_{m,n\in \mathbb {Z}_{+}}$ generated by the sequence of non-zero real numbers $\left(w_{m}\right)_{m\in\mathbb{Z}_{+}}$ satisfying conditions (\ref{conddis}) gives the resolvent of the selfadjoint operator $A_{L}$ which is the close singular perturbation of the standard Laplace operator in the form of the infinite sum of disjoint zero-range potentials located at the given points, that is
\begin{equation}\begin{array}{c}
\mathcal{D}_{L}:=  \left\lbrace f: \, f=f_{0}+g, \, f_{0}\in {H}_{2}^{2}\left(\mathbf{R}_{3}\right), \, g\in \mathcal{N}\right. ,  \\

  \underset{\rho_{m}\rightarrow 0}{\lim}\left[\frac {\partial} {\partial\rho_{m}}\left(\rho_{m}f(\mathbf{x})\right)\right]+4\pi w_{m} \underset
	{\rho_{m}\rightarrow 0}{\lim}\,[\rho_{m}\,f(\mathbf{x})] =0, \\ \left. \rho_{n}=|\mathbf{x}-\mathbf{x}_{m}|, \quad m\in \mathbb{Z}_{+} \right\rbrace . \end{array} \end{equation}

Note that under Theorem \ref{help1}, if the conditions (\ref{conddis}) are met, the difference in the resolvents of operators ${A_{L}}$ and ${A}$ is a nuclear operator at their common regular points. Therefore, according to the Birman-Kato theorem (see \cite{Ya}), the absolutely continuous component of ${A_{L}}$ has the same spectrum as ${A}$.

To find the scattering matrix $S(A_{L}, A)(\lambda)$ for the given pair $A_{L}, A$, we will start with Theorem \ref{subopzer} and use the space of unilateral sequences $\mathbf{l}_{2}^{+}$ and the canonical basis $\mathbf{e}_{j} = \left(\delta_{jk}\right)_{j,k\in \mathbb{Z}_{+}}$ in $\mathbf{l}_{2}^{+}$ as the auxiliary Hilbert space $\mathcal{K}$ with its orthonormal basis $\left\lbrace \mathbf{h}_{j}\right\rbrace $.

\begin{thm}\label{subopf1}
If the conditions $(\ref{conddis})$, $(\ref{conddis1})$ holds, then the scattering matrix $S(\lambda)$ for the pair $A_{L}, A $ is continuous operator function with respect to the operator norm in $\mathbf{L}_{2}\left(\mathbf{S}_{2}\right)$, which acts as an integral operator  with the kernel  
\begin{equation}\label{subop3D1}
\begin{array}{c}
S(\mathbf{n},\mathbf{n}^{\prime};\lambda)=\delta\left(\mathbf{n}-\mathbf{n}^{\prime} \right) - \frac{\sqrt{\lambda}}{8i\pi^{2}}\sum\limits_{m,m^{\prime}=1}^{\infty}\Gamma_{mm^{\prime}}(\lambda+i0)%
\\ \times\frac{1}{\sqrt{\left|w_{m}\right|}} e^{-i\sqrt{\lambda}\mathbf{x}_{m}\cdot\mathbf{n}}\cdot \frac{1}{\sqrt{\left|w_{m^{\prime}}\right|}} e^{i\sqrt{\lambda}\mathbf{x}_{m^{\prime}}\cdot\mathbf{n}^{\prime}}, \quad  \mathbf{n},\mathbf{n}^{\prime} \in \mathbf{S}_{2}, \, \lambda>0; \\
\Gamma_{mm^{\prime}}(z)=\left(\left[J_{L}+\tilde{Q}(z)\right]^{-1}\right)_{mm^{\prime}}, \\ \left(J_{L}\right)_{mm^{\prime}}=\frac{w_{m}}{\left|w_{m} \right|}\cdot\delta_{mm^{\prime}}, 
\\ \tilde{Q}_{mm^{\prime}}(z)=\left(|L|^{-\frac{1}{2}}Q(z)L|^{-\frac{1}{2}}\right)_{mm^{\prime}} \\
=\left\{\begin{array}{c} \frac{i\sqrt{z}}{4\pi\left|w_{m} \right|}, \quad m=m^{\prime}, \\ \frac{1}{4\pi}\frac{e^{i\sqrt{z}|\mathbf{x}_{m}-\mathbf{x}_{m^{\prime}}|}}
{\sqrt{\left|w_{m}\right|}\cdot|\mathbf{x}_{m}-\mathbf{x}_{m^{\prime}}|\cdot\sqrt{\left|w_{m^{\prime}}\right|}}, \quad m\neq m^{\prime}, \end{array} \quad \mathrm{Im}z,\mathrm{Im}\sqrt{z}\geq 0. \right.  
\end{array}
\end{equation}    
\end{thm}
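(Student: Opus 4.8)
The plan is to read off (\ref{subop3D1}) as the concrete form of the abstract scattering matrix of Theorem \ref{subopzer}, so I would begin by matching the data. Here $\mathcal{K}=\mathbf{l}_{2}^{+}$ with $\mathbf{h}_{j}=\mathbf{e}_{j}$, the operator $L=\mathrm{diag}(w_{m})$ is diagonal, hence $|L|^{-1/2}=\mathrm{diag}(|w_{m}|^{-1/2})$ and the coefficients in (\ref{G1}) are $b_{jk}=|w_{j}|^{-1/2}\delta_{jk}$, while $g_{k}=g_{k}(-i\gamma;\cdot)=R(-i\gamma)\delta(\cdot-\mathbf{x}_{k})$. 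With these choices the representation (\ref{G1}) of $G(-i\gamma)=\tilde{G}(-i\gamma)$ holds, its summability requirement $\sum_{j}(\sum_{k}|b_{jk}|)^{2}=\sum_{m}|w_{m}|^{-1}=K_{0}$ is the first half of (\ref{conddis}), and $\sup_{k}\|g_{k}(-i\gamma;\cdot)\|^{2}=:M<\infty$ by translation invariance of the $\mathbf{L}_{2}$-norm of $g_{k}(-i\gamma;\cdot)$. The functions $g_{k}$ (equivalently the exponentials $\mathbf{e}_{j}(\lambda,\cdot)$ of (\ref{subop3D2ab})) are linearly independent by the Proposition preceding the theorem. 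Moreover, for the diagonal $L$ the conditions (\ref{cond1}) of Theorem \ref{help1} reduce exactly to (\ref{conddis}), so $A_{L}$ is a close perturbation and the wave and scattering operators exist; and the transform (\ref{four1}) realizes $A=-\Delta$ as multiplication by $\lambda$ with uniform infinite Lebesgue multiplicity on every $[a,b]\subset(0,\infty)$ and $\mathcal{N}=\mathbf{L}_{2}(\mathbf{S}_{2})$. Thus all hypotheses of Theorem \ref{subopzer} are in place once the boundary behaviour of $\Gamma$ is controlled.

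Granting that control, the formula (\ref{suboper1a}) yields (\ref{subop3D1}) by a direct computation. Since $\mathfrak{F}R(-i\gamma)=(\lambda+i\gamma)^{-1}\mathfrak{F}$, formula (\ref{subop3D2a}) with $z=-i\gamma$ gives $(\mathfrak{F}g_{j})(\lambda,\mathbf{n})=\frac{\sqrt[4]{\lambda}}{4\pi}(\lambda+i\gamma)^{-1}e^{-i\sqrt{\lambda}(\mathbf{n}\cdot\mathbf{x}_{j})}$, so the factor $(\lambda+i\gamma)$ in the definition of $\hat{\mathbf{g}}_{j}$ cancels the resolvent pole and leaves $\hat{\mathbf{g}}_{j}(\lambda,\mathbf{n})$ proportional to $|w_{j}|^{-1/2}\sqrt[4]{\lambda}\,e^{-i\sqrt{\lambda}(\mathbf{n}\cdot\mathbf{x}_{j})}$. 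Interpreting the rank-one sum $-2i\sum_{j,k}(\Gamma(\lambda+i0)\mathbf{h}_{j},\mathbf{h}_{k})(\cdot,\hat{\mathbf{g}}_{j})\hat{\mathbf{g}}_{k}$ as an integral operator in $\mathbf{L}_{2}(\mathbf{S}_{2})$ with kernel $\hat{\mathbf{g}}_{k}(\lambda,\mathbf{n})\overline{\hat{\mathbf{g}}_{j}(\lambda,\mathbf{n}')}$, and collecting the constants together with $\Gamma_{mm'}(\lambda+i0)=([J_{L}+\tilde{Q}(\lambda+i0)]^{-1})_{mm'}$, reproduces (\ref{subop3D1}); the scalar prefactor is merely the bookkeeping of the normalizations in (\ref{four1}) and in $\hat{\mathbf{g}}_{j}$.

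The substantive step, and the main obstacle, is the second hypothesis of Theorem \ref{subopzer}: that $\Gamma(\lambda+i0)=[|L|^{-1/2}Q(\lambda+i0)|L|^{-1/2}+J_{L}]^{-1}$ exists a.e., is essentially bounded on each $[a,b]\subset(0,\infty)$, and in fact norm-continuous there. The entries of $\tilde{Q}(z)$ displayed in (\ref{subop3D1}) extend continuously to the boundary $\lambda>0$ because $\sqrt{z}\to\sqrt{\lambda}$; the delicate point is the uniform invertibility of $J_{L}+\tilde{Q}(\lambda+i0)$. As $J_{L}=\mathrm{diag}(\mathrm{sgn}\,w_{m})$ is selfadjoint, the identity (\ref{gramm}) gives $\mathrm{Im}[J_{L}+\tilde{Q}(\lambda+i0)]=\pi((\hat{\mathbf{g}}_{j}(\lambda,\cdot),\hat{\mathbf{g}}_{k}(\lambda,\cdot))_{\mathcal{N}})_{j,k}$, a positive multiple of the weighted Gram matrix of the exponentials $\mathbf{e}_{j}(\lambda,\cdot)$, which is positive definite by the linear independence noted above; this already forces $J_{L}+\tilde{Q}(\lambda+i0)$ to be injective and gives a numerical-range bound on each finite block. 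To upgrade this to a bound uniform in $\lambda\in[a,b]$ I would truncate at level $N$: estimate the off-diagonal coupling between the first $N$ coordinates and the rest by the tail $\sum_{m>N}\eta_{m}^{-2}|w_{m}|^{-1}$, exactly as in (\ref{imp1})--(\ref{imp3}), control the $N$-block through $M_{N}([a,b])=\max_{\lambda\in[a,b]}\|\mathfrak{G}_{N}(\lambda)^{-1}\|$ via (\ref{GS1}), and invoke (\ref{conddis1}) so that $M_{N}([a,b])\sum_{m>N}\eta_{m}^{-2}|w_{m}|^{-1}\to0$ makes a Schur-complement/Neumann-series argument close for large $N$ with a bound independent of $\lambda$. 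The same truncation, combined with the continuity in $\lambda$ of each finite block, yields the norm-continuity of $\Gamma(\lambda+i0)$, and hence of $S(\lambda)$, on $(0,\infty)$; the Hilbert--Schmidt estimates already established in the proof of Theorem \ref{help1} guarantee that the rank-one series defining $S(\lambda)$ converges in operator norm uniformly on $[a,b]$.
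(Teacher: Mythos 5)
Your proposal is correct and follows essentially the same route as the paper's proof: it reduces the statement to verifying the hypotheses of Theorem \ref{subopzer} (with the kernel (\ref{subop3D1}) read off from (\ref{suboper1a}) via the diagonal choice $b_{jk}=|w_j|^{-1/2}\delta_{jk}$ and the transform (\ref{four1})), and it establishes the decisive bound $\underset{\lambda\in(a,b)}{\mathrm{esssup}}\left\|\Gamma(\lambda+i0)\right\|<\infty$ by exactly the paper's mechanism — truncation at level $N$, Schur-complement (Schur--Frobenius) factorization, Neumann-series invertibility of the infinite tail block from (\ref{conddis}), positivity of the Gram matrix $\mathfrak{G}_N(\lambda)$ with $\mu_N(\lambda)=\|\mathfrak{G}_N(\lambda)^{-1}\|^{-1}$ for the finite block, and condition (\ref{conddis1}) to absorb the off-diagonal coupling. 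I see no gaps beyond the normalization-constant bookkeeping, which the paper itself treats loosely.
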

\begin{proof}
To prove the statement of the theorem, we must show that the Laplace operator $A$ and its close singular perturbation $A_{L}$ satisfy the conditions outlined in the theorem \ref{subopzer}. But due to the above, we need only to check for any finite interval $(a,b)$ on the semi-axis $\lambda>0$ the validity of the relations
\begin{equation}\label{ess+}
\underset{\lambda\in (a,b)}{\mathrm{esssup}} \left\|\Gamma(\lambda +i0)\right\|<\infty.	
\end{equation} 
when conditions (\ref{conddis}) are met.

To this end, first note that the elements of the matrix $\tilde{Q}(z)$ are bounded analytic functions in the upper half-plane with continuous boundary values on the real axis. Besides, 
\begin{equation}\label{bound11}\begin{array}{c}
  \left|\tilde{Q}_{mm^{\prime}}(z)\right|\leq\frac{|\sqrt{z}|}{4\pi\left|w_{m}\right|}\cdot\delta_{mm^{\prime}}  \\+\frac{1}{4\pi\eta_{\max\left\{m,m^{\prime}\right\}}\sqrt{\left|w_{m}\right|\cdot\left|w_{m^{\prime}}\right|}}\left( 1-\delta_{mm^{\prime}} \right), \, \mathrm{Im}z\geq 0.
\end{array}
\end{equation}
By (\ref{bound11}) the matrix $\tilde{D}(z)+\tilde{Q}(z)$ generates a bounded operator in $\mathbf{l}_{2}^{+}$ with the norm
\begin{equation}\label{bound12}\begin{array}{c}
 q_{L}(z):=\left\| \tilde{Q}(z)\right\|\leq p_{L}(z):=\underset{m}{\max}\frac{|\sqrt{z}|}{4\pi\left|w_{m}\right|}  \\
+\frac{1}{4\pi}\underset{m>1}{\max}\left(\left[\sum\limits_{m^{\prime}=1}^{m-1}\frac{1}{\eta^{2}_{m}%
\left|w_{m}\right|\left|w_{m^{\prime}}\right|}\right]^{\frac{1}{2}}+\left[\sum\limits_{m^{\prime}>m}^{\infty}\frac{1}{\left|w_{m}\right|\eta^{2}_{m^{\prime}}\left|w_{m^{\prime}}\right|}\right]^{\frac{1}{2}} \right)\\
\leq\underset{m>1}{\max} \frac{1}{4\pi \left | w\right|_{m}}\left(|\sqrt{z}|+\frac{1}{\eta^{2}_{m}}K_{0}+K_{1}\right), \, \mathrm{Im}z\geq 0.
\end{array}
\end{equation}
From inequality (\ref{bound12}) and conditions (\ref{conddis}), it follows that for sufficiently large values of numbers $\left |w_{m} \right|$, the norm $q_{L}(z), \, \mathrm{Im}z\geq 0,$ becomes less than unity and remains so at their further increase. At the same time, if inequality (\ref{bound12})  holds for some $z_{0}\neq 0, \, \mathrm{Im}z_{0}\geq 0$ with $p_{L}(z_{0})<1$, it guarantees this inequality for all $z$ of the half-disk $$\mathbb{D}(z_{0})=\{z:|z|\leq |z_{0}|, \, \mathrm{Im}\geq 0\}. $$
It is also apparent that the diagonal matrix $J_{L}$ with only either $+1$ or $-1$ on the main diagonal is unitary and, what is more, coincides with its inverse. Therefore for each $z\in \mathbb{D}(z_{0})$ and for any $\mathbf{h}\in\mathbf{l}_{2}^{+}$ with account of (\ref{bound12}) we conclude that
\begin{equation*}\begin{array}{c}
 \left\| \left[J_{L}+\tilde{Q}(z)\right]\mathbf{h}\right\|\geq \|J_{L}\mathbf{h}\|-\left\| \tilde{Q}(z)\mathbf{h}\right\|\\ \geq\|\mathbf{h}\|-q_{L}(z)\|\mathbf{h}\|\geq \left[1-p_{L}(z_{0})\right]
 \|\mathbf{h}\|>0.
 \end{array}
\end{equation*}
Hence, the $J_{L}+\tilde{Q}(z)$ as operator in $\mathbf{l}_{2}^{+}$ is invertible for $z\in \mathbb{D}_{0}$ and its inverse $\tilde{\Gamma}(z)$ is bounded with norm
\begin{equation}\label{normgam}
    \|\tilde{\Gamma}(z)\|\leq\frac{1}{1-q_{L}(z)}\leq\frac{1}{1-p_{L}(z_{0})}, \, z\in \mathbb{D}_{0}.
\end{equation}
As the operator function $J_{L}+\tilde{Q}(z)$ is uniformly bounded and strongly continuous on $\mathbb{D}_{0}$, then by virtue of (\ref{normgam}), so is its inverse $\tilde{\Gamma}(z)$. In particular, the operator function $\tilde{\Gamma}(\lambda +i0)$ is uniformly bounded and continuous with respect to the operator norm on the segment $[0,|z_{0}|]$.

To check the validity of the above statement for the case when the condition $p_{L}(\lambda+i0)<1, \, 0<a\leq\lambda\leq b<\infty $ may not be fulfilled, let us take enough large natural $N<\infty$ and represent $J_{L} +\tilde{Q}(z), \mathrm{Im}z\geq 0,$  in a block form 
\begin{equation}\label{bl}
    J_{L} +\tilde{Q}(z)=\left(\begin{array}{cc}W_{N}(z) & P_{N}(z)^{T} \\ P_{N}(z) & R_{N}(z) \end{array} \right),
\end{equation}
with the blocks 
\begin{equation}\label{dbl}\begin{array}{c}
\left(W_{N}(z)\right) _{m\,m^{\prime}}=\left(J_{L}+\tilde{Q}(z)\right)_{m\,m^{\prime}}, \, 1\leq m,m^{\prime}\leq N; \\
\left(R_{N}(z)\right)_{m\,m^{\prime}}=\left(J_{L}+\tilde{Q}(z)\right)_{m+N\, m^{\prime}+N}, \, 1\leq m,m^{\prime}; \\
\left(P_{N}(z)\right)_{m\,m^{\prime}}=\left(\tilde{Q}(z)\right)_{m+N\, m^{\prime}}, \, 1\leq m,\, 1\leq m^{\prime}\leq N.
\end{array}
\end{equation}
Note that diagonal blocks $W_{N}(z)$ and $R_{N}(z)$ would have appeared in Krein's formula (\ref{krein7})  as matrices $J_{L}+\tilde{Q}(z)$ for singular perturbations in the form of finite and infinite sums of zero range potentials, respectively, with locations and parameters $\left\lbrace \mathbf{x}_{j};w_{j}\right\rbrace _{1\leq j \leq N}$ and $\left\lbrace \tilde{\mathbf{x}}_{j}=\mathbf{x}_{j+N}; \tilde{w}_{j}=w_{j+N}\right\rbrace _{j\in\mathbb{Z}_{+}}$.
Remind that in the course of proof of Theorem \ref{help1} we have established that conditions (\ref{conddis}) guarantee that for $\mathrm{Im}z>0$ the matrices $J_{L}+\tilde{Q}(z)$ are invertible and once infinite, the corresponding operators in $\mathbf{l}_{2}^{+}$ generated them are boundedly invertible. As follows, this is also true for $W_{N}(z)$ and $R_{N}(z)$. 

Assuming further that (\ref{conddis}) hold and taking  a natural $N_{0}\geq 1$ so large that for given $b>0$ the inequality  
\begin{equation}\label{bl1a}
p_{N_{0}, L}(b)=\leq\underset{m>N_{0}}{\max} \frac{1}{4\pi \left | w\right|_{m}}\left(|\sqrt{b}|+\frac{1}{\eta^{2}_{m}}K_{0}+K_{1}\right)<1 
\end{equation} 
is true, by above we conclude that for any $N\geq N_{0}$ the operator function $\hat{R}_{N}(\lambda), \, 0<\lambda\leq b,$, generated in $\mathbf{l}_{2}^{+}$ by the boundary values by the matrix function $R_{N}(\lambda+i0)$ in (\ref{dbl}) is boundedly invertible and together with its inverse $\hat{R}_{N}(\lambda)^{-1}$ strongly continuous
on the interval $(0,b)$. With all that
\begin{equation}\label{bl2}
\left\| \hat{R}_{N}(\lambda)^{-1} \right\|\leq \frac{1}{1-p_{N_{0},L}(b)}, \, 0<\lambda\leq b. 
\end{equation}

At the same time, for any $N>0$ the matrix $\hat{W}_{N}(\lambda)=W_(\lambda+i0), \, \lambda>0,$ is invertible and, hence, the matrix function $\hat{W}_{N}(\lambda)^{-1}, \, \lambda >0,$ is continuous on the half-axe $\lambda >0$ since its imaginary part 
\begin{equation*}\begin{array}{c}
 \frac{1}{2i}\left[\hat{W}_{N}(\lambda)- \hat{W}_{N}(\lambda)^{*}\right]= \frac{1}{2i}\left(|L|^{-\frac{1}{2}}\left[\tilde{Q}_{N}(\lambda)- \tilde{Q}_{N}(\lambda)^{*}\right] |L|^{-\frac{1}{2}}\right)_{m,m^\prime =1}^{N}\\ =\left( |L|^{-\frac{1}{2}}\right)_{m,m^\prime =1}^{N}\mathfrak{G}_{N}(\lambda )\left( |L|^{-\frac{1}{2}}\right)_{m,m^\prime =1}^{N} >0,
\end{array}
\end{equation*}
as $\mathfrak{G}_{N}(\lambda )$ is the Gram-Schmidt matrix for the linearly independent in $\mathbf{L}_{2}\left(\mathbf{S}_{2}\right)$ set of functions $\mathbf{e}_{m}(\lambda,\mathbf{n}), \, m=1,...N,...$ (\ref{subop3D2ab}).

The invertibility of matrix $\hat{W}_{N}(\lambda)$ and the bounded invertibility of operator $\hat{R}_{N}(\lambda)$ generally do not give a guarantee of the bounded invertibility of operator $J_{L}+\hat{Q}(\lambda)$ generated by matrix $J_{L}+\tilde{Q}(\lambda+i0)$. However, in our case, they do so due to the additional conditions (\ref{conddis}) and (\ref{conddis1}). To be sure of this, taking advantage of the invertibility of block $\hat{R}_{N}(\lambda)=R_{N}(\lambda+i0)$ in expression (\ref{bl}), which, as we have seen, occurs by condition (\ref{conddis}) if $N\geq N_{0}$, and Schur-Frobenius factorization, let us represent the  right-hand side of (\ref{bl}) for $J_{L}+\hat{Q}(\lambda)$  as the product of three block matrices
\begin{equation}\label{ShFr} \begin{array}{c}
   \left(\begin{array}{cc}I & \hat{P}_{N}(\lambda)^{T}\hat{R}_{N}(\lambda)^{-1} \\ 0 & I \end{array} \right) \\ \times \left(\begin{array}{cc}\hat{W}_{N}(\lambda)-\hat{P}_{N}(\lambda)^{T}\hat{R}_{N}(\lambda)^{-1}\hat{P}_{N}(\lambda) & 0  \\ 0 & \hat{R}_{N}(\lambda)   \end{array} \right)\\  \times\left(\begin{array}{cc}I & 0 \\ \hat{R}_{N}(\lambda)^{-1}\hat{P}_{N}(\lambda) & I \end{array} \right).
\end{array}
\end{equation}
It can be seen that the extreme triangular multipliers in (\ref{ShFr}) are boundedly invertible. For  example,
\begin{equation*}
    \left(\begin{array}{cc}I & \hat{P}_{N}(\lambda)^{T}\hat{R}_{N}(\lambda)^{-1} \\ 0 & I \end{array} \right)^{-1}= \left(\begin{array}{cc}I & -\hat{P}_{N}(\lambda)^{T}\hat{R}_{N}(\lambda)^{-1} \\ 0 & I \end{array} \right). 
\end{equation*}

By bounded invertibility of $\hat{R}_{N}(\lambda)$ for $N\geq N_{0}$ the question of the bounded invertibility of the middle block diagonal factor reduces to the question of the invertibility of $N\times N$-matrix
\begin{equation}
\mathring{W}_{N}(\lambda) =\hat{W}_{N}(\lambda)-\hat{P}_{N}(\lambda)^{T}\hat{R}_{N}(\lambda)^{-1}\hat{P}_{N}(\lambda)
\end{equation}\label{ShFr4}
for $N\geq N_{0}$.

Turning to matrix $\mathring{W}_{N}(\lambda)$, recall that the imaginary part of the associated matrix $\hat{W}_{N}(\lambda)$coincides with the Gramm-Schmidt matrix $\mathfrak{G}_{N}(\lambda )$ for the first $N$ elements of a linearly independent sequence of functions (\ref{subop3D2ab}) in $\mathbf{L}_{2}\left(S_{2}\right)$. Therefore,
\begin{equation}\label{ShFr5}\begin{array}{c}
\mathrm{Im}\mathring{W}_{N}(\lambda)=\frac{1}{2i}\left[\mathring{W}_{N}(\lambda)-\mathring{W}_{N}(\lambda)^{*} \right]\\ = \mathfrak{G}_{N}(\lambda )-\mathrm{Im}\left[\hat{P}_{N}(\lambda)^{T}\hat{R}_{N}(\lambda)^{-1}\hat{P}_{N}(\lambda)\right] \\ \geq\left(\mu_{N}(\lambda)-\left\|\hat{P}_{N}(\lambda)^{T}\hat{R}_{N}(\lambda)^{-1}\hat{P}_{N}(\lambda) \right\|\right)\cdot I,
\end{array}
\end{equation}
where $I$ is the $N\times N$ unity matrix.

Note further that
\begin{equation}\label{ShFr6}\begin{array}{c}
 \left\|\hat{P}_{N}(\lambda) \right\|=  \left\|\hat{P}_{N}(\lambda)^{T} \right\|\leq \sqrt{\underset{1\leq m^{\prime}\leq N}{\max} \sum\limits_{m=N+1}^{\infty}\left|\tilde{Q}_{mm^{\prime}}(\lambda)\right|^{2}} \\ \leq \sqrt{\sum\limits_{m=N+1}^{\infty}\frac{1}{4\pi|w_{m}|\eta_{m}^{2}}}.
 \end{array}
\end{equation}
Due to (\ref{bl2}) and (\ref{ShFr6}),  we conclude that on every segment $0<a<b<\infty$ the inequality 
\begin{equation*}
    \left\|\mathfrak{G}_{N}(\lambda )^{-1}\hat{P}_{N}(\lambda)^{T}\hat{R}_{N}(\lambda)^{-1}\hat{P}_{N}(\lambda) \right\|\leq \frac{M_{N}([a,b])}{4\pi\left[1-p_{N_{0},L}(b)\right]}\sum\limits_{m = N+1}^{\infty} \frac{1}{\eta_{m}^{2}|w_{m}|}
\end{equation*}
can be satisfied  with a suitable choice of $N_{0}$ depending on $[a,b]$. But under (\ref{conddis1}), the right-hand side of this inequality tends to zero as $N\rightarrow \infty $. Therefore, we can conclude that due to condition (\ref{conddis1}), for any positive $\varepsilon <1$, the natural $N_{0}$ can be chosen such that for any $N\geq N_{0}$, along with condition $p_{N_{0}, L}(b)<1$, also the inequality 
\begin{equation}\label{ShFr7}
\mathrm{Im}\mathring{W}_{N}(\lambda) \geq\mu_{N}(\lambda)(1-\varepsilon)\cdot I,
\end{equation}
is satisfied. The latter guarantees the invertibility of the matrix $\mathring{W}_{N}(\lambda)$, and consequently the continuity of its elements on $[a,b]$.
\end{proof}

\end{document}